\definecolor{webblue}{rgb}{0, 0, 1.0}  % less intense blue
\definecolor{webgreen}{rgb}{0,1.0,0} % less intense green
\definecolor{webred}{rgb}{1.0, 0, 0}   % less intense red
\definecolor{lily1}{rgb}{0.5,0,0.5}
\definecolor{lily2}{rgb}{0.75,0,0.25}
\definecolor{lily3}{rgb}{0.25,0,0.75}
\definecolor{green1}{rgb}{0.25,0.25,0}
\definecolor{green2}{rgb}{0.5,0.5,0}
\definecolor{green3}{rgb}{0.75,0.75,0}
\newcommand{\N}{\mathbb{N}}
\newcommand{\cA}{\mathcal{A}}
\newcommand{\cE}{\mathcal{E}}
\newcommand{\cN}{\mathcal{N}}
\newcommand{\cU}{\mathcal{U}}
\newcommand{\cV}{\mathcal{V}}
\newtheorem{theorem}{Theorem}
\newtheorem{lemma}[theorem]{Lemma}
\newtheorem{definition}[theorem]{Definition}
\newtheorem{example}[theorem]{Example}
\providecommand{\keywords}[1]
{
  \small	
  \textbf{Keywords} #1
}
\providecommand{\subclass}[1]
{
  \small	
  \textbf{Mathematics Subject Classification (2000)} #1
}
\begin{document}

\title{Capacity of an infinite family of networks related to the diamond network for fixed alphabet sizes}

\author{Sascha Kurz}
\affil{Mathematisches Institut, Universit\"at Bayreuth, D-95440 Bayreuth, Germany, sascha.kurz@uni-bayreuth.de}

\date{}

\maketitle

\begin{abstract}
We consider the problem of error correction in a network where the errors can occur only 
on a proper subset of the network edges. For a generalization of the so-called Diamond 
Network we consider lower and upper bounds for the network's ($1$-shot) capacity for fixed 
alphabet sizes.

\medskip

\noindent
\keywords{network coding \and capacity \and  adversarial network \and single-error correction codes} 

\noindent
\subclass{94B65}
\end{abstract}
%% 94B65 Bounds on codes 

%****************************************************
% Document
%****************************************************

\section{Introduction}\label{sec:intro}
The correction of errors introduced by noise or adversaries in networks have been studied in a number of papers, see 
e.g.\ \cite{cai2006network,yeung2006network}. Here we assume that the adversary can manipulate a subset of the edges of a given network. 
If the adversary is unrestricted, i.e., all edges can be manipulated, then it is well known that the $1$-shot capacity 
of the network can be achieved, %% by combining a rank-metric code with a linear network code, 
see e.g.\ \cite{koetter2008coding}. 
In \cite{beemer2022network,beemer2021curious} specific networks are considered where the actual $1$-shot capacity cannot be achieved using
linear operations at intermediate nodes, so that network decoding is required in order to achieve the capacity. For a unified combinatorial 
treatment of adversarial network channels we refer to \cite{ravagnani2018adversarial}.

The aim of this paper is to study the capacity of a generalization of the diamond network, see \cite{beemer2022network,beemer2021curious} for the 
diamond network and Figure~\ref{fig_generalized_diamond_network} for the generalized case. This family of networks is one of five infinite 
families studied in \cite{beemer2022network_arxiv}, more precisely family~B. The dashed edges $e_0,\dots,e_{s+1}$ can be manipulated by 
the adversary and the solid edges $f_0,\dots,f_s$ are those that cannot be manipulated. Node $S$ is the source, $T$ the terminal, and $V_1,V_2$ 
are the intermediate nodes. We assume that the adversary can manipulate exactly one dashed edge and we are interested in the $1$-shot capacity, i.e., 
where the edges of the network are just used once. 

\begin{figure}
\begin{center}
  \begin{tikzpicture}[line width=1pt, scale=0.5]
  \draw[black] (0,2) circle (8mm);	
  \draw[black] (5,0) circle (8mm);
  \draw[black] (5,4) circle (8mm);
  \draw[black] (10,2) circle (8mm);
  \node at (0,2) {$S$};	
  \node at (5,0) {$V_2$};
  \node at (5,4) {$V_1$};
  \node at (10,2) {$T$};
  \node at (2.5,3.3) {$e_0$};
  \node at (7.5,3.45) {$f_0$};
  \node at (2.5,2.2) {$e_1$};
  \node at (2.5,0.7) {$\vdots$};
  \node at (2.5,1.5) {$\vdots$};
  \node at (2.5,-0.4) {$e_{s+1}$};
  \draw[dotted, ->] (0.75,2.3) -- (4.1,3.64);
  \draw[dotted, ->] (0.75,1.7) -- (4.1,0.36);	
  \draw[dotted, ->] (0.75,1.8) .. controls (2.5,2.1) .. (4.1,0.56);
  \draw[dotted, ->] (0.75,1.6) .. controls (2,-0.1) .. (4.1,0.20);
  \draw[->] (5.75,3.7) -- (9.1,2.36);	
  \draw[->] (5.75,-0.2) .. controls (7,1.9) .. (9.1,1.76);
  \draw[->] (5.75,-0.4) .. controls (8,-0.7) .. (9.1,1.30); 	
  \node at (7.5,2.2) {$f_1$};
  \node at (7.5,1) {$\vdots$};
  \node at (7.5,-0.2) {$f_{s}$};	
  \end{tikzpicture}
  %%\caption{The generalized diamond network $\cN_s$.}
  \caption{The network $\cN_s$.}
  \label{fig_generalized_diamond_network}
\end{center}
\end{figure}

The possible mappings from the input to the output space of the nodes are numerous. So, after stating the 
necessary preliminaries in Section~\ref{sec_preliminaries}, we study the structure of {\lq\lq}good{\rq\rq} 
network codes in the single-error correction setting for generalized diamond networks $\cN_s$ in 
Section~\ref{sec_structure}. Our main structure result, see Theorem~\ref{thm_cover_problem}, states that we 
may assume that node $V_1$ just forwards its received information and node $V_2$ performs a partial error-correction 
and sends sets of codeword candidates as states, so that the design of an optimal network code becomes a covering problem 
for sets.  Based on these insights we determine lower and upper bounds for the $1$-shot 
capacity of $\cN_s$ in Sections~\ref{sec_ilp} and \ref{sec_bounds}, respectively.  
We close with a brief conclusion in Section~\ref{sec_conclusion}. 

While the family of networks $\cN_s$ seems to be rather specific and non-general, there are a few reasons to 
study this specific class in detail. It is one of three families of so-called $2$-level networks whose 
capacity couldn't be determined in \cite{beemer2022network_arxiv}. If we remove vertex $V_1$, then we are in 
the classical situation of coding along a single edge, so that this class is somehow the smallest possible 
example where we can see network (coding) effects. Even though $\cN_s$ is very well structured, the exact determination 
of its $1$-shot capacity for a given parameter $s$ and a given alphabet size seems to be a combinatorial 
challenge. In \cite{beemer2022network_arxiv} reductions for general networks to $3$-level networks and then 
to $2$-level networks were presented, so that it makes sense to study the smaller and more structured 
networks first. Nevertheless there is some hope that some of the presented techniques may generalize to 
different networks too.

\section{Preliminaries}
\label{sec_preliminaries}
In this section we formally define communication networks, network codes, and the channels they induce cf.~ \cite{beemer2022network_arxiv,ravagnani2018adversarial}. 
Due to a more restricted setting we use slightly different notation and refer the interested reader to \cite{ravagnani2018adversarial} for more details.  

\begin{definition}
  A \emph{(single-source communication) network} is a $4$-tuple $\cN=(\cV,\cE,S,\mathbf{T})$, where $(\cV,\cE)$ is 
  a finite, directed, acyclic multigraph $S\in\cV$ is the \emph{source}, and $\emptyset\neq \mathbf{T}\subseteq
  \cV\backslash\{S\}$ is the set of \emph{terminals}. Additionally we assume that there exists a directed path 
  from $S$ to any $T\in\mathbf{T}$ and for every $V\in\cV\backslash(\{S\}\cup\mathbf{T})$ there exist 
  directed paths from $S$ to $V$ and from $V$ to some $T\in\mathbf{T}$.    
\end{definition}

The elements of $\cV$ are called \emph{nodes}, the elements of $\cE$ are called \emph{edges}, and the 
elements of $\cV\backslash(\{S\}\cup\mathbf{T})$ are called \emph{intermediate nodes}. The set of incoming 
and outgoing edges for a node $V\in\cV$ are denoted by $in(V)$ and $out(V)$, respectively. 
%% Their cardinalities are the indegree and outdegree of $V$ , which are denoted by $\operatorname{deg}^-(V)$ and $\operatorname{deg}^+(V)$, respectively.
As alphabet we choose an arbitrary finite set $\cA$ with at least two elements, e.g.\ $\cA=\left\{0,1,\dots,|\cA|-1\right\}$.

In order to describe the transmission across the network we need functions $F_V\colon \cA^{in(V)}\to\cA^{out(V)}$ for each intermediate
node $V$. We write $A^B:=\left\{ \varphi\colon B\to A\right\}$ for the set of all functions from $B$ to $A$. For the ease of notation 
we associate the elements in $\varphi\in\cA^{B}$, for arbitrary subsets $B\subseteq \cE$, with the vectors in $\cA^{|B|}$. This is made precise by fixing an ordering 
{\lq\lq}$<${\rq\rq} of the elements of $\cE$ and replacing $\varphi$ by the vector of images $\left(\varphi(b_1),\dots,\varphi(b_{|B|})\right)\in\cA^{|B|}$, where 
$\left\{b_1,\dots,b_{|B|}\right\}=B$ and $b_1<\dots<b_{|B|}$. For our network depicted in Figure~\ref{fig_generalized_diamond_network}, see Definition~\ref{def_family_B} 
for a formalization, we assume the order $e_0<\dots<e_{s+1}<f_0<\dots<f_s$ for the edges. E.g., if $B\subseteq \cE$ contains $e_0$, then the corresponding 
symbol on that edge is the first coordinate of each vector $x\in\cA^{|B|}$. 

\begin{definition}
  Let $\cN=(\cV,\cE,S,\mathbf{T})$ be a network and $\cA$ an alphabet. A \emph{network code} $F$ for $(\cN,\cA)$ is a 
  family of functions $\left\{F_V\,:\, V\in \cV\left(\backslash\{S\}\cup \mathbf{T}\right)\right\}$, where 
  $F_V\colon \cA^{|in(V)|}\to\cA^{|out(V)|}$ for all intermediate nodes $V$. %% $V\in \cV\left(\backslash\{S\}\cup \mathbf{T}\right)$.
  %%  
  %% and 
  %% $F_V\colon \cA^{in(V)}\to\cA^{out(S)}$ for all $V\in\mathbf{T}$. For each $c\in \cA^{out(S)}$ 
  %% and each $T\in\mathbf{T}$ we write $\cN_T(F,c)$ for the output of $F_T$ propagating $c$ starting from 
  %% the source $S$ over the network. If $\#\mathbf{T}=1$ we write $\cN(F,c)$ instead of $\cN_T(F,c)$.  
\end{definition}

\begin{definition}
  An \emph{(outer) code} for a network $\cN=(\cV,\cE,S,\mathbf{T})$ and an alphabet $\cA$ is a subset 
  $C\subseteq \cA^{|out(S)|}$ with $|C|\ge 1$. 
\end{definition}

The elements of an outer code $C$ are called \emph{codewords}. 
%% It is often convenient to associate the elements in $\varphi\in\cA^{B}$, for arbitrary subsets $B\subseteq \cE$, with 
%% the vectors in $\cA^{|B|}$. This is made precise by fixing an ordering {\lq\lq}$<${\rq\rq} of the elements of $\cE$ and replacing $\varphi$ by the vector 
%% of images $\left(\varphi(b_1),\dots,\varphi(b_{|B|})\right)\in\cA^{|B|}$, where $\left\{b_1,\dots,b_{|B|}\right\}=B$ and $b_1<\dots<b_{|B|}$. We will use both 
%% notions whithout using different symbols. 
For any two $c,c'\in \cA^{|B|}$, where $B\subseteq \cE$ is arbitrary, we define the \emph{Hamming distance} $d(c,c')$ as 
the number of differing coordinates, i.e., $d(c,c'):=\left\{1\le i\le |B|\,:\, c_i\neq c_i'\right\}$. Note that the Hamming distance is a metric. For each subset 
$C\subseteq\cA^{|out(S)|}$ we write $d(C)$ for the minimum Hamming distance $d(c,c')$ for any pair of different elements $c,c'\in C$, where we formally set $d(C):=\infty$ if $|C|\le 1$.    

In this paper we consider networks that are affected by potentially restricted adversarial noise. More precisely, we assume that at most 
$t$ of the alphabet symbols on a given edge set $\cU \subseteq \cE$ can be changed into any other alphabet symbols. 

\begin{definition}
  Let $\cN=(\cV,\cE,S,\mathbf{T})$ be a network, $\cA$ an alphabet, $T \in \mathbf{T}$ a terminal, $F$ a
  network code for $(\cN , \cA)$, $\cU \subseteq \cE$ a subset of the edges, and $t \ge 0$ an integer. We denote by
  $\Omega[\cN , \cA, F, S \to T, \cU, t] \colon \cA^{|out(S)|}\to \cA^{|in(T)|}$ the (adversarial) \emph{channel} representing the transfer 
  from $S$ to terminal $T \in \mathbf{T}$, when the network code $F$ is used by the vertices and at most $t$ edges 
  in $\cU$ are manipulated. In this context, we call $t$ the \emph{adversarial power}.
\end{definition}

Let $\Omega[\cN , \cA, F, S \to T, \cU, t]$ be an adversarial channel and $C\subseteq \cA^{|out(S)|}$ be an outer code for $\cN$. We say that $C$ is 
\emph{good} for $\Omega$ if $\Omega(c)\neq \Omega(c')$ for all codewords $c,c'\in C$ with $c\neq c'$. The \emph{($1$-shot) capacity} of $(\cN , \cA, \cU, t)$ 
is the largest real number $\kappa$ for which there exists an outer code $C\subseteq \cA^{|out(S)|}$ and a network code $F$ for $(\cN , \cA)$ with 
$\kappa = \log_{|\cA|}(|C|)$ such that $C$ is good for each adversarial channel $\Omega[\cN , \cA, F, S \to T, \cU, t]$ for all $T \in \mathbf{T}$. 
The notation for this largest $\kappa$ is $C_1(\cN , \cA, \cU, t)$.

\medskip

In the general model of Kschischang and Ravagnani, see \cite{ravagnani2018adversarial}, an adversary can modify $t$ edges from a given subset $\cU$ of all edges. 
Here we restrict ourselves to the situation where $\cU=out(S)$, i.e., the set of all outgoing edges of the unique source. Additionally, we assume that the adversary 
is omniscient, i.e., he can see all the sent symbols. With this,
given a network $\cN=(\cV,\cE,S,\mathbf{T})$ and an alphabet $\cA$ we say that a pair of an outer code $C\subseteq \cA^{|out(S)|}$ and a network code $F$ for $\cN$ is $t$-error correcting 
if $C$ is good for each adversarial channel $\Omega[\cN , \cA, F, S \to T, out(S), t]$ for all $T \in \mathbf{T}$. We also say that $C$ can be $t$-error 
corrected on $\cN$ if a network code $F$ exists such that $(C,F)$ is $t$-error correcting for $\cN$.  

\begin{lemma}
  \label{lemma_min_dist}
  Let $out(S)\subseteq \cU$. If an outer code $C$ can be $t$-error corrected for a network $\cN=(\cV,\cE,S,\mathbf{T})$ and an arbitrary alphabet $\cA$, then we have $d(C)\ge 2t+1$. 
\end{lemma}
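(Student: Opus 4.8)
The plan is to adapt the classical fact that a $t$-error-correcting code has minimum distance at least $2t+1$ to the present setting, where the adversary may only corrupt edges in $\cU=out(S)$ while the remainder of the network behaves deterministically. I argue by contraposition: fix a network code $F$ witnessing that $C$ can be $t$-error corrected on $\cN$, and suppose $d(C)\le 2t$. I will then produce a terminal $T\in\mathbf{T}$ and two distinct codewords whose images under the channel $\Omega[\cN,\cA,F,S\to T,out(S),t]$ coincide, contradicting that $(C,F)$ is $t$-error correcting.

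So assume $|C|\ge 2$ (otherwise $d(C)=\infty$ and there is nothing to prove) and that some distinct $c,c'\in C$ satisfy $d(c,c')\le 2t$. Let $I\subseteq out(S)$ be the set of coordinates on which $c$ and $c'$ differ, so $|I|\le 2t$, and write $I$ as a disjoint union $I=I_1\cup I_2$ with $|I_1|\le t$ and $|I_2|\le t$. Define a word $z\in\cA^{|out(S)|}$ by letting $z$ agree with $c'$ on $I_1$, with $c$ on $I_2$, and with $c$ (equivalently, with $c'$) on $out(S)\setminus I$. Since $z$ coincides with $c$ on $I_2$ and on $out(S)\setminus I$, we get $d(c,z)=|I_1|\le t$; symmetrically $d(c',z)=|I_2|\le t$. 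This is the usual ``meeting in the middle'' of two nearby codewords.

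Now fix any terminal $T\in\mathbf{T}$. Transmitting $c$ while the adversary flips precisely the symbols carried by the edges in $I_1$ changes the word on $out(S)$ into $z$; transmitting $c'$ while the adversary flips precisely the symbols carried by the edges in $I_2$ also changes the word on $out(S)$ into $z$. In both cases at most $t$ edges of $out(S)$ are manipulated, which is admissible, and from $out(S)$ onwards the functions $F_V$ transport this content through $\cN$ in a completely determined way, with no further edges available to the adversary. Hence both transmissions yield the same received word $w\in\cA^{|in(T)|}$ at $T$, so $w$ lies both in $\Omega[\cN,\cA,F,S\to T,out(S),t](c)$ and in $\Omega[\cN,\cA,F,S\to T,out(S),t](c')$; consequently $C$ is not good for this channel and $(C,F)$ is not $t$-error correcting, the desired contradiction. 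It follows that no two distinct codewords of $C$ are at Hamming distance at most $2t$, i.e., $d(C)\ge 2t+1$.

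The only delicate point I anticipate is the bookkeeping in the last paragraph: one must make explicit that $\Omega[\cN,\cA,F,S\to T,out(S),t]$ collects, for a given source word, all words that can be received at $T$ over the admissible adversarial actions, and that once the content on $out(S)$ is pinned down it is pushed through $\cN$ deterministically by the fixed family $F$. Granting this, the remainder is the classical argument, and the boundary cases $t=0$ (where the claim reduces to $c\ne c'\Rightarrow d(c,c')\ge 1$) and $|C|\le 1$ need no separate treatment.
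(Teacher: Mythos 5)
Your proposal is correct and is essentially the paper's own argument: both proofs take two codewords at distance at most $2t$, construct a ``midpoint'' word $x$ on $out(S)$ with $d(c,x)\le t$ and $d(x,c')\le t$, let the adversary move either codeword to $x$, and conclude that the deterministic network code then forces $\Omega(c)=\Omega(c')$, contradicting goodness. Your version merely spells out the partition $I=I_1\cup I_2$ and the deterministic propagation from $out(S)$ to the terminal, which the paper leaves implicit.
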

\begin{proof}
  Assume that $d(C)\le 2t$ and let $c,c'$ be two codewords with $d(c,c')\le 2t$. Then there exists a vector $x\in \cA^{|out{S}|}$ with $d(c,x)\le t$ and $d(x,c')\le t$. 
  So, for each of the codewords $c$ and $c'$ the adversary can modify the values of the edges in $out(S)\subseteq\cU$ to $x$, so that $\Omega(c)=\Omega(c')$ while $c\neq c'$ -- 
  contradiction.%%\hfill{$\square$}  
\end{proof}
If $\cU\not\supseteq out(S)$ then Lemma~\ref{lemma_min_dist} does not need to be valid in general.

In this paper we will consider a specific parameterized family of networks:
\begin{definition}
  \label{def_family_B}
  For each positive integer $s$ the network $\cN_s$ has one source $S$, one terminal 
  $T$, and two intermediate nodes $V_1$, $V_2$. There is exactly one directed edge from $S$ to $V_1$, one directed 
  edge from $V_1$ to $T$, $s+1$ parallel edges from $S$ to $V_2$, and $s$ parallel edges from $V_2$ to $T$,
  see Figure~\ref{fig_generalized_diamond_network}.  
\end{definition}

We remark that the notation $B_s$ was used in \cite{beemer2022network_arxiv} for $\cN_s$ and that $\cN_1=B_1$ is the 
original diamond network, see \cite{beemer2021curious}. In $\cN_s$ we label the edge from $S$ to $V_1$ as $e_0$ and the $s+1$ edges from $S$ to $V_2$ by $e_1,\dots,e_{s+1}$. 
The unique edge from $V_1$ to $T$ is denoted by $f_0$ and the $s$ edges from $V_2$ to $T$ are labeled by $f_1,\dots, f_s$,  
cf.~Figure~\ref{fig_generalized_diamond_network}. 

\begin{example}
  \label{ex_1}
  For alphabet $\cA=\{0,1,2\}$, $t=1$, and $s=2$ a code of size $5$ for $\cN_s$ is given by $C=\{0 0 0 0,0 1 1 1, 1 0 1 2,1 1 2 0,2 0 2 1\}$. We easily determine 
  $d(C)=3$ matching the lower bound in Lemma~\ref{lemma_min_dist}.
\end{example}

Since we are interested in fixed alphabet sizes in the following we will directly consider the maximum cardinality $\sigma(\cN_s,|\cA|)$ of an outer code 
$C$ that can be $1$-error corrected for $\left(\cN_s,\cA\right)$, i.e., we have $\log_{|\cA|} \sigma(\cN_s,|\cA|)=C_1(\cN_s,\cA,out(S),1)$. In Example~\ref{ex_2} we will 
continue Example~\ref{ex_1} and show that the stated code can be $1$-error corrected, i.e., we have $\sigma(\cN_s,3)\ge 5$. 

As an abbreviation we set
\begin{equation}
  \overline{C}:= \left\{ c'\in\cA^{s+2}\,\mid\, \exists c\in C: d(c,c')\le 1\right\}
\end{equation}
for the set of all possible manipulations of a codeword and the codewords themselves.

\section{Structure of network codes for generalized diamond networks $\cN_s$}
\label{sec_structure}

The aim of this section is to obtain some structure results on network codes $F$ that are $1$-error correcting for a given 
outer code $C$ on the network $\cN_s$ as well as on outer codes $C$ that can be $1$-error corrected.  
Lemma~\ref{lemma_min_dist} gives the lower bound $d(C)\ge 3$ for the minimum Hamming distance.

%% For a given alphabet $\cA$ and each intermediate vertex $V$ with $|in(V)|=|out(V)|$ we denote by $id_V$ the identity mapping $x\mapsto x$ 
%% for all $x\in \cA^{|in(V)|}$.  
%% 
%% \begin{lemma}
%%   \label{lemma_just_forward} Let $(C,F)$ be $1$-error correcting for a given alphabet $\cA$ and $\cN_s$. Then also $(C,\tilde{F})$ is 
%%   $1$-error correctiong for $\cN_s$ and the same alphabet, where $\tilde{F}'_{V_1}=id_{V_1}$ and $\tilde{F}'_{V_2}=F_{V_2}$. 
%% \end{lemma}
%% \begin{proof}
%% 
%% \end{proof}

%% We start with properties of the network code $F$.
%% W.l.o.g.\ we can assume that the inner node $V_1$ just forwards the information symbol received via edge $e_0$ from $S$. 
%% The idea is to incorporate any modification inside $V_1$ directly in $F_T$. 
%% \begin{lemma}
%%   \label{lemma_just_forward}
%%   If $C$ can be $1$-error corrected for $\cN_s$, then there exists a corresponding network code $F$ 
%%   such that $F_{V_1}$ is the identity mapping.
%% \end{lemma}  
%% \begin{proof}
%%   Let $F'$ be a network code for $\cN_S$ such that $(F',C)$ is $1$-error correcting for $\cN_s$. We choose 
%%   $F_{V_1}$ as the identity mapping and set $F_{V_2}=F'_{V_2}$. Consider the mapping $G\colon \cA^{in(T)}\to  \cA^{in(T)}$ 
%%   defined by $\left(f_0,\dots,f_s\right)\mapsto \left(F'_{V_1}(f_0),f_1,\dots,f_s\right)$ and set 
%%   $F_T=F'_T\circ G$.%%\hfill{$\square$} 
%% \end{proof}
%% In other words, we can assume that the intermediate node $V_1$ just forwards each received symbol.

For each $c\in \cA^{s+2}$ we write $c=\left(c_1,c_2\right)$, where $c_1\in\cA$. Given our ordering of the edges, the symbol sent along 
edge $e_0$ (before the possible modification of the adversary) is $c_1$.
%% $c_1$ corresponds to $e_0$ and $c_2$ corresponds to $e_1,\dots,e_{s+1}$, i.e., we decompose $out(S)$ into $in(V_1)$ and $in(V_2)$. 
As an abbreviation we set 
\begin{equation}
  C_2:=\left\{c_2\,\mid\, \left(c_1,c_2\right)\in C\right\}.
\end{equation}  
Since $C_2$ is just a truncation of $C$ we have $d(C_2)\ge 2$ in our $1$-error correcting setting.

\begin{definition}
  \label{def_tau}
  For each $c_2\in\cA^{s+1}$ we define
  \begin{equation}
    \tau(c_2):=\left\{\left(a,c_2'\right)\in C\,:\,a\in\cA, d(c_2,c_2')\le 1\right\}.
  \end{equation}
\end{definition}
In words, if $F_{V_2}$ receives the input $c_2$, then the originally sent codeword $c'\in C$ has to be contained in $\tau(c_2)$. Moreover, the 
adversary can ensure that $F_{V_2}$ receives the input $c_2$ for all codewords $c'\in\tau(c_2)$. Clearly, we have $|\tau(c_2)|=1$ if $c_2\in C_2$.  

%% Similarly as $\overline{C}$ we define  
%% $$
%%   \overline{C_2}:=\left\{c_2\in\cA^{out(S)\backslash\left\{e_0\right\}}\,\mid\, \exists c\in C, a\in\cA: d\big(\!c,(a,c_2)\big)\le 1\right\}.
%% $$  
%% For each $c_2'\in\cA^{out(S)\backslash\left\{e_0\right\}}$ we define
%% $$
%%   \tau(c_2'):=\left\{\left(a,c_2\right)\in C\,\mid\,d(c_2,c_2')\le 1\right\}.
%% $$
%% In words, if $T_{V_2}$ receives the input $c_2'$, then the originally sent codeword $c\in C$ has to be contained in $\tau(c_2')$. Moreover, the 
%% adversary $\mathbf{A}$ can ensure that $T_{V_2}$ receives the input $c_2'$ for all codewords $c\in\tau(c_2')$. Clearly, we have $\tau(c_2')=\emptyset$ 
%% if $c_2'\notin \overline{C_2}$ and $\#\tau(c_2')=1$ if $c_2'\in C_2$. Note that no two different elements 
%% $(a,c_2),(\hat{a},\hat{c_2})\in\tau(c_2')$ can coincide in the $e_0$-component, i.e.\ $a\neq \hat{a}$, so that $\tau(c_2')\le \#\cA$.   

\begin{lemma}
  Let $C$ be a code that can be $1$-error corrected for the network $\cN_s$ and alphabet $\cA$. For each $c_2\in\cA^{s+1}$ we have $|\tau(c_2)|\le 
  \min\{|\cA|,s+1\}$.
\end{lemma}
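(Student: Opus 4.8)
The plan is to bound $|\tau(c_2)|$ by two separate arguments, one giving the bound $|\cA|$ and one giving the bound $s+1$.

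First I would establish the bound $|\tau(c_2)|\le s+1$. Fix $c_2\in\cA^{s+1}$ and suppose $(a,c_2')\in\tau(c_2)$, so that $d(c_2,c_2')\le 1$. If $c_2'=c_2$ then, since $C_2$ has minimum distance at least $2$, there is at most one codeword with second component equal to $c_2$; combined with the case $c_2'\neq c_2$ we may as well assume all codewords in $\tau(c_2)$ differ from $c_2$ in exactly one coordinate (the degenerate cases only make $|\tau(c_2)|$ smaller). Now suppose two distinct codewords $(a,c_2')$ and $(b,c_2'')$ in $\tau(c_2)$ both differ from $c_2$ in the \emph{same} coordinate $i$. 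Then $c_2'$ and $c_2''$ agree on all coordinates except possibly $i$, so $d(c_2',c_2'')\le 1$, forcing $c_2'=c_2''$ (again by $d(C_2)\ge 2$), and hence $(a,c_2'),(b,c_2'')$ have the same second component; but then by minimum distance $d(C)\ge 3$ they must also agree in the first coordinate, so they are equal — contradiction. Therefore the map sending a codeword in $\tau(c_2)$ to the unique coordinate in which its second component differs from $c_2$ is injective into $\{1,\dots,s+1\}$, giving $|\tau(c_2)|\le s+1$.

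Next I would establish the bound $|\tau(c_2)|\le|\cA|$, and this is the step where I expect the real content of the lemma to lie, since it must exploit $1$-error-correctability rather than just the minimum-distance consequence. Consider the adversarial strategy in which the adversary forces $V_2$ to receive exactly $c_2$ (possible for every codeword in $\tau(c_2)$, by the remark after Definition~\ref{def_tau}) and does not touch edge $e_0$. Then the terminal $T$ receives $F_{V_1}(a)$ on edge $f_0$, where $a=c_1$ is the true first symbol, together with whatever $V_2$ outputs from input $c_2$ — and crucially $V_2$'s output is the \emph{same} function value for all these codewords. For the code to be good, the channel outputs must be distinct, so the values $F_{V_1}(a)$ must be pairwise distinct over the codewords $(a,c_2')\in\tau(c_2)$. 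But $F_{V_1}$ maps $\cA$ to $\cA$ (the edge $f_0$ carries a single symbol), so there can be at most $|\cA|$ distinct values $F_{V_1}(a)$, and in particular at most $|\cA|$ distinct first coordinates $a$ among codewords in $\tau(c_2)$. Since any two codewords in $\tau(c_2)$ sharing a first coordinate would have second components at distance $\le 1$, hence equal, hence be the same codeword, the first coordinates of elements of $\tau(c_2)$ are already distinct; thus $|\tau(c_2)|\le|\cA|$.

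Combining the two bounds gives $|\tau(c_2)|\le\min\{|\cA|,s+1\}$, as claimed. The main obstacle is getting the $|\cA|$ bound cleanly: one must be careful that ``the adversary forces input $c_2$ at $V_2$ simultaneously for all codewords in $\tau(c_2)$'' is legitimate — it is, because $\tau(c_2)$ is defined precisely as the set of codewords within distance $1$ of $c_2$ in the second block, and the adversary controls one edge in $out(S)=\{e_0,\dots,e_{s+1}\}$. Once that observation is in place, both halves reduce to the pigeonhole principle applied to, respectively, the coordinate where $c_2'$ deviates from $c_2$ and the single-symbol output of $F_{V_1}$.
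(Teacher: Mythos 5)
Your proof is correct, but only its first half follows the paper's route. For $|\tau(c_2)|\le s+1$ you argue exactly as the paper does: after disposing of the degenerate case, every element of $\tau(c_2)$ differs from $c_2$ in exactly one coordinate, and no two elements can use the same coordinate, by $d(C_2)\ge 2$ together with $d(C)\ge 3$. For $|\tau(c_2)|\le |\cA|$ the paper stays purely metric and does not touch the network code at all: if two distinct elements of $\tau(c_2)$ shared a first coordinate, their second blocks would be at Hamming distance at most $2$ (triangle inequality through $c_2$), so the two codewords would be at distance at most $2$, contradicting $d(C)\ge 3$ from Lemma~\ref{lemma_min_dist}; hence the first coordinates are pairwise distinct and the bound follows. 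You instead fix a $1$-error-correcting network code $F$ and run an adversarial-channel argument in the style of Lemmas~\ref{lemma_disjoint_e_0_components} and~\ref{lemma_codewords_c_2}: force input $c_2$ at $V_2$, leave $e_0$ untouched, and note that the outputs $\left(F_{V_1}(a),F_{V_2}(c_2)\right)$ must be pairwise distinct, so $c\mapsto F_{V_1}(c_1)$ is injective on $\tau(c_2)$ and $|\tau(c_2)|\le|\cA|$. This is a legitimate derivation (your worry about the adversary acting per codeword is settled by the paper's own use of such channels), but it is heavier than needed, since the paper's argument uses only the minimum distance. One sentence of yours is off, though harmlessly so: two codewords of $\tau(c_2)$ with the same first coordinate have second blocks at distance at most $2$, not at most $1$, and equality does not follow from $d(C_2)\ge 2$; the correct reason they cannot coexist is $d(C)\ge 3$ -- or simply your own channel argument, which already gives the injectivity you wanted, making that sentence redundant.
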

\begin{proof}
  For $|\tau(c_2)|\le 1$ the statement is true due to our assumptions $s\ge 1$ and $|\cA|\ge 2$, so that we assume $|\tau(c_2)|\ge 2$ in the following.  
  Since $d(C_2)\ge 2$ and $d(c_2,c_2')\le 1$ for each codeword $(c_1',c_2')\in C$ with $c_1'\in\cA$ we have $d(c_2,c_2')=1$ (using $|\tau(c_2)|>1$). I.e.\ $c_2'$ 
  differs from $c_2$ in exactly one of the $s+1$ coordinates. Since no two codewords can differ in the same coordinate due to $d(C_2)\ge 2$ we have $|\tau(c_2)|\le s+1$. 
  Moreover, $d(C)\ge 3$ implies that all codewords in $\tau(c_2)$ pairwise differ in their first coordinate, so that $|\tau(c_2)|\le |\cA|$.%%\hfill{$\square$} 
\end{proof}

\begin{lemma}
  \label{lemma_disjoint_e_0_components}
  Let $(C,F)$ be $1$-error correcting for $\cN_s$. For each pair of different elements $c_2,c_2'\in \cA^{s+1}$ with $F_{V_2}(c_2)=F_{V_2}(c_2')$ 
  we have that $\tau(c_2)\cup\tau(c_2')$ does not contain two different elements with coinciding first coordinate. 
\end{lemma}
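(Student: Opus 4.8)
I want to show that if $F_{V_2}(c_2)=F_{V_2}(c_2')$ for distinct $c_2,c_2'\in\cA^{s+1}$, then no two codewords in $\tau(c_2)\cup\tau(c_2')$ share the same first coordinate. The natural approach is a proof by contradiction: suppose $(a,d_2),(a,d_2')$ are two distinct codewords in $\tau(c_2)\cup\tau(c_2')$ with the same first coordinate $a\in\cA$; I then exhibit an adversarial strategy that makes the terminal $T$ unable to distinguish them, contradicting that $(C,F)$ is $1$-error correcting.

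First I would set up the adversary's options. Recall $in(T)=\{f_0,f_1,\dots,f_s\}$ and that $F_{V_1}$ maps the symbol on $e_0$ to the symbol on $f_0$, while $F_{V_2}$ maps the word on $e_1,\dots,e_{s+1}$ to the word on $f_1,\dots,f_s$. If the true codeword is $(a,d_2)$, the adversary (manipulating one edge in $out(S)=\{e_0,\dots,e_{s+1}\}$) can either leave $e_0$ alone — so $f_0$ carries $F_{V_1}(a)$ — or touch $e_0$; and can push the word received by $V_2$ to anything at Hamming distance $\le 1$ from $d_2$. The definition of $\tau$ is precisely engineered so that $(a,d_2)\in\tau(c_2)$ means the adversary can make $V_2$ receive exactly $c_2$ while the true codeword is $(a,d_2)$, by changing at most one of the edges $e_1,\dots,e_{s+1}$, and leaving $e_0$ untouched. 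The key case split is on whether each of $(a,d_2),(a,d_2')$ lies in $\tau(c_2)$ or in $\tau(c_2')$.

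The main work is the case analysis. If both codewords lie in $\tau(c_2)$ (or both in $\tau(c_2')$), then when the true codeword is $(a,d_2)$ the adversary makes $V_2$ receive $c_2$; when it is $(a,d_2')$ the adversary again makes $V_2$ receive $c_2$; in both cases $e_0$ is untouched, so $f_0$ carries $F_{V_1}(a)$ and $f_1,\dots,f_s$ carry $F_{V_2}(c_2)$ — identical outputs at $T$, the contradiction. The interesting case is $(a,d_2)\in\tau(c_2)$ and $(a,d_2')\in\tau(c_2')$ (up to relabeling): when the true codeword is $(a,d_2)$, steer $V_2$'s input to $c_2$; when it is $(a,d_2')$, steer $V_2$'s input to $c_2'$; in both situations $e_0$ is untouched so $f_0$ carries $F_{V_1}(a)$, and since $F_{V_2}(c_2)=F_{V_2}(c_2')$ by hypothesis, the word on $f_1,\dots,f_s$ is the same in both cases. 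Hence $\Omega(a,d_2)=\Omega(a,d_2')$ for the terminal $T$, while $(a,d_2)\ne(a,d_2')$, contradicting $1$-error correction. I should double-check that the adversary never needs to touch $e_0$, which is where the shared first coordinate $a$ is used, and that in each instance at most one edge total is manipulated (indeed at most one among $e_1,\dots,e_{s+1}$ suffices, by the definition of $\tau$).

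The one subtlety — and the step I expect to need the most care — is making sure that "$V_2$ receives $c_2$" really does force the output on $f_1,\dots,f_s$ to be $F_{V_2}(c_2)$ regardless of which codeword was sent, i.e., that $F_{V_2}$ is a deterministic function of its input and the adversary's interference is entirely captured by the change to $V_2$'s received word. This is immediate from the network-code model (Definition of network code, and the channel $\Omega[\cN_s,\cA,F,S\to T,out(S),1]$), but it is worth stating explicitly, since the whole argument rests on the fact that the adversary's single edge manipulation on $e_1,\dots,e_{s+1}$ is enough to collapse two distinct codewords to the same terminal observation once their first coordinates already agree.
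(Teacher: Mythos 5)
Your proposal is correct and follows essentially the same route as the paper: assume two distinct codewords $(a,d_2),(a,d_2')$ in $\tau(c_2)\cup\tau(c_2')$ share the first coordinate, let the adversary steer $V_2$'s input to $c_2$ or $c_2'$ (touching at most one of $e_1,\dots,e_{s+1}$ and never $e_0$), and use $F_{V_2}(c_2)=F_{V_2}(c_2')$ to conclude $\Omega\bigl((a,d_2)\bigr)=\bigl(F_{V_1}(a),F_{V_2}(c_2)\bigr)=\Omega\bigl((a,d_2')\bigr)$, contradicting $1$-error correction. The paper merely compresses your case split into the single phrase that each codeword is changed to either $(a,c_2)$ or $(a,c_2')$; your explicit version is the same argument.
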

\begin{proof}
  Let $c_2,c_2'\in \cA^{s+1}$ with $F_{V_2}(c_2)=F_{V_2}(c_2')$. Assume that there exist $a\in\cA$, $b,b'\in C_2$ with $b\neq b'$, and $(a,b),(a,b')\in \tau(c_2)\cup\tau(c_2')$.
  So, there exists an adversary channel $\Omega$ that changes the codeword $x:=(a,b)$ to either $(a,c_2)$ or $(a,c_2')$ and also the codeword $x':=(a,b')$ to either $(a,c_2)$ or $(a,c_2')$. 
  Using $F_{V_2}(c_2)=F_{V_2}(c_2')$ we compute $\Omega(x)=\left(F_{V_1}(a),F_{V_2}(c_2)\right)=\Omega(x')$ while $x\neq x'$ -- contradiction.%%\hfill{$\square$}
\end{proof}
 
%% \begin{lemma}
%%   \label{lemma_disjoint_e_0_components}
%%   Let $(F,C)$ be $1$-error correcting for $\cN_s$. For each pair of different elements $c_2,c_2'\in \overline{C_2}$ with $F_{V_2}(c_2)=F_{V_2}(c_2')$ 
%%   we have that $\tau(c_2)\cup\tau(c_2')$ does not contain two different elements with coinciding $e_0$-component. 
%% \end{lemma}
%% \begin{proof}
%%   Let $c_2,c_2'\in \overline{C_2}$ with $F_{V_2}(c_2)=F_{V_2}(c_2')$. Assume that there exist $a\in\cA$, $b,b'\in C_2$ with $b\neq b'$, and $(a,b),(a,b')\in \tau(c_2)\cup\tau(c_2')$.
%%   W.l.o.g.\ we can assume that $F_{V_1}$ is the identity mapping, see Lemma~\ref{lemma_just_forward}. Thus, the adversary can enforce that for 
%%   both different codewords $(a,b)$, $(a,b')$ the input for the terminal $T$ is given by $a$ and $F_{V_2}(c_2)=F_{V_2}(c_2')$, so that no correct decoding 
%%   is possible.%%\hfill{$\square$}
%% \end{proof}

\begin{lemma}
  \label{lemma_codewords_c_2}
  Let $(C,F)$ be $1$-error correcting for $\cN_s$. Let $c:=\left(c_1,c_2\right)\in C$ be a codeword and $c':=\left(c_1',c_2'\right)\in \overline{C}$ with $c_1,c_1'\in\cA$, $c\neq c'$ and 
  $F_{V_2}(c_2)= F_{V_2}(c_2')$. Then, we have $\tau(c_2')=\{c\}$.
\end{lemma}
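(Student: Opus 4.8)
The plan is to prove two things: that $\tau(c_2')\neq\emptyset$, and that every element of $\tau(c_2')$ equals $c$; together these immediately give $\tau(c_2')=\{c\}$. Non-emptiness is a direct consequence of the hypothesis $c'\in\overline{C}$: choose a codeword $\hat c=(\hat c_1,\hat c_2)\in C$ with $d(\hat c,c')\le 1$. Since the Hamming distance of a concatenation is the sum of the distances of the two parts, $d(\hat c_2,c_2')\le d(\hat c,c')\le 1$, so $\hat c\in\tau(c_2')$ by Definition~\ref{def_tau}.

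For the second claim, fix an arbitrary $\tilde c=(\tilde c_1,\tilde c_2)\in\tau(c_2')$ with $\tilde c\neq c$ (if $\tilde c=c$ there is nothing to show); by definition $\tilde c\in C$ and $d(\tilde c_2,c_2')\le 1$. I would exhibit a single adversarial channel realization $\Omega$ whose images of $c$ and $\tilde c$ coincide. On input $\tilde c$, let the adversary rewrite (at most) the one coordinate among $e_1,\dots,e_{s+1}$ in which $\tilde c_2$ and $c_2'$ differ, so that $V_2$ receives $c_2'$; on input $c$, let the adversary rewrite (at most) the symbol on $e_0$ from $c_1$ to $\tilde c_1$; on every other input the adversary does nothing. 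Each prescribed action manipulates at most one edge of $out(S)$, so $\Omega$ is legitimate, and $\Omega(\tilde c)=\bigl(F_{V_1}(\tilde c_1),F_{V_2}(c_2')\bigr)=\bigl(F_{V_1}(\tilde c_1),F_{V_2}(c_2)\bigr)=\Omega(c)$, using the hypothesis $F_{V_2}(c_2)=F_{V_2}(c_2')$. Since $\tilde c,c\in C$ and $(C,F)$ is $1$-error correcting, $\Omega$ must be injective on $C$ — a contradiction. Hence $\tilde c=c$, and combining with non-emptiness yields $c\in\tau(c_2')$ together with $\tau(c_2')\subseteq\{c\}$, i.e.\ $\tau(c_2')=\{c\}$.

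I do not expect a genuine obstacle; the only point requiring care is the interface with the adversarial model — verifying that each of the two prescribed manipulations touches at most one edge (including the degenerate cases $\tilde c_2=c_2'$ and $c_1=\tilde c_1$, where the adversary is idle), and recalling, exactly as in the proof of Lemma~\ref{lemma_min_dist}, that a channel realization is permitted to act differently on different inputs. Note that the argument uses neither $d(C)\ge 3$ nor the earlier structural lemmas, and that it also subsumes the subcase $c'\in C$: there $c'\in\tau(c_2')$, so the second claim would force $c'=c$, contradicting $c\neq c'$; hence that configuration of the hypotheses simply cannot occur and the statement holds in all cases.
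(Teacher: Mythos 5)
Your proof is correct and uses essentially the same argument as the paper: the identical adversarial channel that alters $e_0$ on input $c$ and (at most) one edge among $e_1,\dots,e_{s+1}$ on input $\tilde c$, combined with $F_{V_2}(c_2)=F_{V_2}(c_2')$, produces the collision ruling out any $\tilde c\in\tau(c_2')$ with $\tilde c\neq c$. Your additional step establishing non-emptiness of $\tau(c_2')$ from $c'\in\overline{C}$ (and hence, via the second claim, $c\in\tau(c_2')$) is a small but worthwhile supplement, since the paper's proof only argues the inclusion $\tau(c_2')\subseteq\{c\}$ and leaves the membership of $c$ implicit.
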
  
\begin{proof}
  Assume that there exists a codeword $\tilde{c}:=\left(\tilde{c}_1,\tilde{c}_2\right)\in\tau(c_2')$ with $\tilde{c}\neq c$ and $\tilde{c}_1\in\cA$. 
  So, there exists an adversary channel $\Omega$ that changes the codeword $c$ to $(\tilde{c}_1,c_2)$ and the codeword $\tilde{c}$ to $(\tilde{c}_1,c_2')$. 
  Using $F_{V_2}(c_2)=F_{V_2}(c_2')$ we compute $\Omega(c)=\left(F_{V_1}(\tilde{c}_1),F_{V_2}(c_2)\right)=\left(F_{V_1}(\tilde{c}_1),F_{V_2}(c_2')\right)=\Omega(\tilde{c})$ 
  while $c\neq \tilde{c}$ -- contradiction.%%\hfill{$\square$}   
\end{proof}

Based on Lemma~\ref{lemma_disjoint_e_0_components} and Lemma~\ref{lemma_codewords_c_2} we can state a characterization of all outer codes that can be $1$-error corrected for $\cN_s$: 
\begin{theorem}  
  \label{thm_cover_problem}
  An outer code $\emptyset\neq C\subseteq \cA^{s+2}$ can be $1$-error corrected for $\cN_s$ iff there exists a set $\mathcal{B}$ 
  consisting of subsets of $C$ of cardinality $|\mathcal{B}|\le \left|\cA\right|^s-|C|$ such that the elements within each subset have pairwise different 
  first coordinates and for each $c_2\in \cA^{s+1}$ with $|\tau(c_2)|>1$ there exists an element $B\in\mathcal{B}$ with $\tau(c_2)\subseteq B$.  
\end{theorem}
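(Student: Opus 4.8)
First, I would nail down the intended reading of the statement. Although it is phrased with $C \subseteq \cA^s$ and $\cA^{s+1}$, given Definition~\ref{def_family_B} and the surrounding notation ($c=(c_1,c_2)$ with $c_2\in\cA^{s+1}$, and $\overline C\subseteq\cA^{s+2}$) the role of these exponents should be $C\subseteq\cA^{s+2}$, with $C_2\subseteq\cA^{s+1}$; I would first fix this and state clearly that $\mathcal B$ is a family of subsets of $C$, each of whose members has pairwise distinct $e_0$-coordinates (``$c_1$-coordinates''), with $|\mathcal B|\le |\cA|^{s+1}-|C_2|$. (Here $|\cA|^{s+1}-|C_2|$ is exactly the number of vectors $c_2\in\cA^{s+1}\setminus C_2$, i.e.\ the number of ``non-codeword'' inputs that $V_2$ can receive.) This bookkeeping is the part most likely to contain an off-by-one subtlety, so I would be careful here.

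For the forward direction, assume $(C,F)$ is $1$-error correcting. The idea is that $F_{V_2}$ partitions $\cA^{s+1}$ into fibres, and on each fibre Lemmas~\ref{lemma_disjoint_e_0_components} and \ref{lemma_codewords_c_2} constrain the sets $\tau(c_2)$. Concretely: for $c_2\in C_2$ we have $|\tau(c_2)|=1$, so only the inputs $c_2\notin C_2$ matter. For each such $c_2$ with $|\tau(c_2)|>1$, all codewords in $\tau(c_2)$ have pairwise distinct first coordinates (shown in the unnamed lemma before Lemma~\ref{lemma_disjoint_e_0_components}), so $B:=\tau(c_2)$ is a legal block. It remains to bound the number of distinct such blocks by $|\cA|^{s+1}-|C_2|$: each block arises from some $c_2\notin C_2$, and there are only $|\cA|^{s+1}-|C_2|$ such inputs, so taking $\mathcal B=\{\tau(c_2): c_2\in\cA^{s+1}\setminus C_2,\ |\tau(c_2)|>1\}$ works; distinct blocks consume distinct inputs, so $|\mathcal B|\le|\cA|^{s+1}-|C_2|$ trivially. (The two structural lemmas are not actually needed to \emph{build} $\mathcal B$ here — they will be needed in the converse, to show such a $\mathcal B$ suffices — but Lemma~\ref{lemma_codewords_c_2} does tell us something useful, namely that when $|\tau(c_2)|>1$ the fibre of $F_{V_2}$ through $c_2$ contains no other $c_2'$ with $|\tau(c_2')|>1$; I would use this to argue $F_{V_2}$ can be replaced by a map that is injective on $C_2$ and constant on each ``ambiguous'' fibre, which is what makes the cover picture clean.)

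For the converse, suppose such a family $\mathcal B$ exists. I construct a network code: set $F_{V_1}=\mathrm{id}$ (the edge $f_0$ just forwards $e_0$). For $F_{V_2}\colon\cA^{s+1}\to\cA^s$ I use the cardinality bound to define an injection $\iota$ from the multiset $\{$ one symbol per unambiguous input $c_2\in C_2\}\cup\{$ one symbol per block $B\in\mathcal B\}$ into $\cA^s$ — this is possible because $|C_2|+|\mathcal B|\le|\cA|^{s+1}\le|\cA|^s\cdot|\cA|$... wait, I need $|C_2|+|\mathcal B|\le|\cA|^s$, which does \emph{not} follow from the hypothesis; so in fact the correct target is that $F_{V_2}$ need only separate the relevant fibres, and since $|C_2|+|\mathcal B|\le|\cA|^{s+1}$ is automatic but $|\cA|^s$ may be smaller, I would instead argue that $V_2$ does not need to transmit $c_2$ faithfully but only enough to let $T$ (who also sees $c_1$ via $f_0$) disambiguate — and $T$'s total received information lives in $\cA^{s+1}$, matching the bound. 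So $F_{V_2}$ sends a codeword-candidate-set identifier, realized as an element of $\cA^s$ together with the knowledge that $c_1$ is available at $T$. Then for any two codewords $c\ne c'$ and any adversarial single-edge error, I check $\Omega(c)\ne\Omega(c')$ by cases on whether the error hits $e_0$ (then $f_0$-coordinate or the $\tau$-set distinguishes them, using $d(C)\ge3$) or hits some $e_i$, $i\ge1$ (then $c_1$ is transmitted correctly and either the $c_1$-coordinate differs, or $c_1=c_1'$ and $c,c'$ lie in a common block $B\supseteq\tau(c_2)$ forces… contradiction with distinct first coordinates in $B$). The main obstacle is precisely this reconciliation of the cardinality $|\cA|^{s+1}$ versus $|\cA|^s$: I expect the resolution is that $F_{V_2}$'s codomain together with $f_0$ gives $T$ an element of $\cA^{s+1}$, so the bound $|\mathcal B|\le|\cA|^{s+1}-|C|$ is exactly what is needed and no contradiction arises — but verifying that $V_2$ can encode ``which block'' without needing more than $s$ symbols (exploiting that ambiguous fibres are disjoint and $C_2$ is one point per fibre) is the delicate counting step I would spend the most care on.
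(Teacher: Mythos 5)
There is a genuine gap, and it sits exactly where you flagged uncertainty: the cardinality bound. You ``repair'' the statement to $|\mathcal{B}|\le|\cA|^{s+1}-|C_2|$, reading it as the number of non-codeword \emph{inputs} of $V_2$. The intended bound is $|\mathcal{B}|\le|\cA|^{s}-|C|$, where $|\cA|^{s}$ is the size of the \emph{output} space of $V_2$ (it has only the $s$ edges $f_1,\dots,f_s$ towards $T$); this bottleneck is the entire content of the theorem. Under your reading the condition is vacuous: for any code with $d(C)\ge 3$ every $\tau(c_2)$ already has pairwise distinct first coordinates, so $\mathcal{B}=\{\tau(c_2):|\tau(c_2)|>1\}$ satisfies your bound trivially, and the theorem would certify every minimum-distance-$3$ code as $1$-error correctable --- contradicting, e.g., the paper's computation that the length-$7$ Hamming code admits no such network code. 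Consequently your forward direction proves only this trivial weakening and skips the real work: one must merge the $\tau$-sets along the fibres of $F_{V_2}$, i.e.\ set $S_b=\bigcup_{c_2:F_{V_2}(c_2)=b}\tau(c_2)$ for $b\in\cA^{s}$, invoke Lemma~\ref{lemma_disjoint_e_0_components} to see each $S_b$ has pairwise distinct first coordinates, and invoke Lemma~\ref{lemma_codewords_c_2} to see that at least $|C|$ of the states $b$ are forced to carry singletons, which is what yields $|\mathcal{B}|\le|\cA|^{s}-|C|$. (Your side remark that Lemma~\ref{lemma_codewords_c_2} forbids two ambiguous inputs in one fibre is also not what it says, and is false --- Example~\ref{ex_3} deliberately puts two ambiguous inputs into one fibre; the lemma only makes fibres through codeword inputs pure.)

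Your converse founders on the same point, and you notice it (``wait, I need $|C_2|+|\mathcal{B}|\le|\cA|^{s}$, which does not follow'') but the proposed resolution does not work. The block identity must be recoverable at $T$ from $V_2$'s output alone, because the $f_0$ symbol carries only the (possibly corrupted) $e_0$ symbol and is needed precisely to select \emph{within} a block when the block is ambiguous; it cannot simultaneously help identify the block, since when the adversary hits $e_0$ the decoder must already know from $y_2$ that the block is a singleton. With the correct hypothesis $|C|+|\mathcal{B}|\le|\cA|^{s}$ the construction is immediate: choose an injection $\pi$ from $\mathcal{B}\cup\{\{c\}:c\in C\}$ into $\cA^{s}$, let $V_1$ forward, let $V_2$ output $\pi$ of (a block covering $\tau(c_2)$ if $|\tau(c_2)|>1$, the singleton if $|\tau(c_2)|=1$), and decode by taking the unique codeword in the block when it is a singleton, and otherwise the unique codeword in the block whose first coordinate matches $y_1$ --- legitimate because $|B|\ge 2$ forces the received $c_2\notin C_2$, hence the error was not on $e_0$. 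So the argument can be completed, but only after restoring the bound $|\cA|^{s}-|C|$; as written, your forward direction establishes a different (weaker) statement and your backward direction is not closed.
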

\begin{proof}
  Let $(C,F)$ be $1$-error correcting for $\cN_s$ and define 
  \begin{equation}
    S_b=\cup_{c_2\in\cA^{s+1}\,:\, F_{v_2}(c_2)=b} \tau(c_2)\subseteq C 
  \end{equation}      
  for all $b\in \cA^s$. From Lemma~\ref{lemma_disjoint_e_0_components} we conclude that the elements in $S_b$ have pairwise different first coordinates for each $b\in \cA^s$. 
  For 
  \begin{equation}
    \mathcal{B}:=\left\{S_b\,:\, b\in \cA^s, \left|S_b\right|\ge 2 \right\}
  \end{equation}   
  we have that for each $c_2\in \cA^{s+1}$ with $|\tau(c_2)|>1$ there exists an element $B\in\mathcal{B}$ with $\tau(c_2)\subseteq B$. 
  For each codeword $c=\left(c_1,c_2\right)$ with $c_1\in\cA$ we have $S_{F_{V_2}(c_2)}=\{c\}$ due to Lemma~\ref{lemma_codewords_c_2}. So using Lemma~\ref{lemma_disjoint_e_0_components}, 
  at least $|C|$ of the sets $S_b$ have cardinality $1$ and $|\mathcal{B}|\le |\cA|^s-|C|$.
  
  \medskip
  
  For the other direction let a set $\mathcal{B}$ satisfying the stated properties be given. We set $\mathcal{B'}:=\big\{\{c\}\,:\,c\in C\big\}$ and $\mathcal{B}'':=\mathcal{B}\cup\mathcal{B}'$, so 
  that $|\mathcal{B}''|\le |\cA|^s$. Let $\pi\colon \mathcal{B}''\to\cA^s$ be an arbitrary injection. With this we define the network code $F$ for $\cN_s$ as follows. We choose 
  $F_{V_1}$ as the identity mapping, i.e., $F_{V_1}(x)=x$ for all $x\in\cA$. For each $c_2\in \cA^{s+1}$ with $|\tau(c_2)|> 1$ we set $F_{V_2}(c_2)$ to the lexicographically minimal element in 
  $\left\{\pi(B)\,:\, B\in\mathcal{B}, \tau(c_2)\subseteq B\right\}$. If $|\tau(c_2)|=1$ we set $F_{V_2}(c_2)$ to $\pi(B)$ for the unique element $B\in \tau(c_2)\in \mathcal{B}'$ and 
  to $\pi(B)$ for an arbitrary element $B\in\mathcal{B}'$ if $\tau(c_2)=\emptyset$.  
  With this $F$ is well defined. In order to show that $(C,F)$ is $1$-error correcting for $\cN_s$ we construct a {\lq\lq}decoder{\rq\rq} $F_T\colon\cA^{|in(T)|}\to\cA^{|out(S)|}$ with 
  $F_T(\Omega(c))=c$ for every codeword $c\in C$ and every adversary channel $\Omega$. For a given $y:=\left(y_1,y_2\right)\in\cA^{s+1}$ with $y_1\in\cA$ let $B\in\mathcal{B}''$ be the unique 
  element with $\pi(B)=y_2$ or $B=\emptyset$ if no such element $B$ exists. If $|B|=0$, then we set $F_T(y)$ to an arbitrary element in $\cA^{s+2}$ since this case cannot occur if a codeword 
  is sent across the network. If $|B|=1$ we set $F_T(y)$ to the unique codeword contained in $B$. If $|B|\ge 2$, then the adversary has not modified the value on edge $e_0$, so that 
  we can set $F_T(y)$ to the unique codeword in $B$ whose first coordinate equals $y_1=F_{V_1}(y_1)$.%%\hfill{$\square$}      
\end{proof}

\begin{example}
  \label{ex_2}
  As in Example~\ref{ex_1} we consider the network $\cN_2$, the alphabet $\cA=\{0,1,2\}$, and the outer code $C$ given by the codewords 
  $c_1=(0,0,0,0)$, $c_2=(0,1,1,1)$, $c_3=(1,0,1,2)$, $c_4=(1,1,2,0)$, and $c_5=(2,0,2,1)$. It can be easily checked that 
  the different values of $\tau(c_2')$ for all $c_2'\in\cA^{s+1}$ are given by  
  $$
    \Big\{
    \emptyset,     
    \{c_1\}, \{c_2\}, \{c_3\}, \{c_4\}, \{c_5\}, \{c_2,c_3,c_5\}, \{c_1,c_3\}, \{c_1,c_4,c_5\}, \{c_2,c_4,c_5\}
    \Big\}.
  $$  
  For e.g.\ $c_2'=(0,1,1)$ we have $\tau(c_2')=\{c_2,c_3,c_5\}$. For the set $\mathcal{B}$ in Theorem~\ref{thm_cover_problem} we may choose 
  $$
    \mathcal{B}=\big\{\{c_2,c_3,c_5\}, \{c_1,c_3\}, \{c_1,c_4,c_5\}, \{c_2,c_4,c_5\}\big\}.
  $$
  Note that the elements of $\mathcal{B}$ are given by $\tau(c_2')$ with $|\tau(c_2')|>1$ and no two elements of $\mathcal{B}$ 
  can be joined since the elements have cardinality at most $|\cA|=3$. Thus, all $3^2=9=5+4$ different output vectors of the intermediate node $V_2$ have 
  to be used. As injection $\pi$ we may choose e.g.\  
  $\pi(\{c_1\})=(0,0)$, $\pi(\{c_2\})=(0,1)$, $\pi(\{c_3\})=(0,2)$, $\pi(\{c_4\})=(1,0)$, $\pi(\{c_5\})=(1,1)$, $\pi(\{c_2,c_3,c_5\})=(1,2)$, 
  $\pi(\{c_1,c_3\})=(2,0)$, $\pi(\{c_1,c_4,c_5\})=(2,1)$, and $\pi(\{c_2,c_4,c_5\})=(2,2)$. With this we e.g.\ have $F_{V_2}\big((0,1,2)\big)=(0,2)$ 
  and $F_{V_2}\big((0,1,1)\big)=(1,2)$.  
\end{example}  

\begin{example}
  \label{ex_3}
  Again we consider the network $\cN_2$ and the alphabet $\cA=\{0,1,2\}$. Now let the outer code $C$ be given by the codewords  
  $c_1=(0,0,0,0)$, $c_2=(0,1,1,1)$, $c_3=(0,2,2,2)$, $c_4=(1,0,1,2)$, and $c_5=(2,0,2,1)$. Here the different values of $\tau(c_2')$ for all $c_2'\in\cA^{s+1}$ are given by  
  $$
    \Big\{
    \emptyset,     
    \{c_1\}, \{c_2\}, \{c_3\}, \{c_4\}, \{c_5\}, \{c_1,c_4\}, \{c_1,c_5\}, \{c_2,c_4,c_5\}, \{c_3,c_4,c_5\}
    \Big\}.
  $$
  The sets $\{c_1,c_4\}$ and $\{c_1,c_5\}$ can be joined to $\{c_1,c_4,c_5\}$, so that we set $$\mathcal{B}=\big\{ \{c_1,c_4,c_5\}, \{c_2,c_4,c_5\}, \{c_3,c_4,c_5\}\big\}$$ in 
  Theorem~\ref{thm_cover_problem} and only $|C|+|\mathcal{B}|=8<3^2$ different output states are used in the corresponding function $F_{V_2}$.
\end{example}

As a byproduct of the proof of Theorem~\ref{thm_cover_problem} we see that it does not make a difference in terms of the capacity if the adversary can manipulate edge $e_0$ or 
$e_0$ and $f_0$. We may even remove the intermediate vertex $V_1$ and replace the two edges $e_0$ and $f_0$ by a direct edge between $S$ and $T$ that can be manipulated by the 
adversary. Going over the proofs of Lemmas \ref{lemma_min_dist}-\ref{lemma_codewords_c_2} we see that the number of outgoing edges of $V_2$ does not play a role. So, 
denoting the network where we replace the $s$ outgoing edges of $V_2$ to $T$ by $s'$ such edges by $\cN_{s,s'}$ we can prove a variant of Theorem~\ref{thm_cover_problem} 
where we replace the upper bound on the cardinality by $|\mathcal{B}|\le \left|\cA\right|^{s'}-|C|$. Cf.\ the code of Example~\ref{ex_3} in the situation where we replace the 
two outgoing edges of $V_2$ in $\cN_2$ by three outgoing edges, which however can only transmit symbols from the restricted alphabet $\{0,1\}$. i.e., a rather specific case 
of the more general situation where the allowed alphabets may differ on the different edges. 

In other words, the conditions of Theorem~\ref{thm_cover_problem} say that the elements of $\mathcal{B}$ cover all $\tau(c_2)$ with $c_2\in\cA^{s+1}$ and $|\tau(c_2)|>1$. The underlying 
cover problem can be solved by a straightforward integer linear programming (ILP) formulation, as it is usually the case for cover problems. This yields a first algorithm to determine the largest 
possible size $\sigma(\cN_s,|\cA|)$ of a code that can be $1$-error corrected for $\cN_s$ and alphabet $\cA$ for any parameter $s\in\N_{\ge 1}$. Due to Lemma~\ref{lemma_min_dist} we just have to 
loop over all codes with minimum Hamming distance at least $3$ and use Theorem~\ref{thm_cover_problem} to check whether $C$ can be $1$-error corrected for $\cN_s$ via an ILP computation.

%% \begin{theorem}  
%%   \label{thm_cover_problem}
%%   Let $C\subseteq \cA^{out(S)}$ be a code with $d(C)\ge 3$, then $C$ can be $1$-error corrected for $\cN_s$ iff there exists a set $\mathcal{B}$ 
%%   consisting of subsets of $C$ with pairwise different $e_0$-component that cover all $\tau(c_2)$ with $c_2\in\overline{C}_2$ and $\#\tau(c_2)>1$, i.e., for each $c_2\in \overline{C}_2$ with 
%%   $\#\tau(c_2)>1$ there exists an element $B\in\mathcal{B}$ with $\tau(c_2)\subseteq B$, that satisfies $\#C+\#\mathcal{B}\le \#\cA^s$.
%% \end{theorem}
%% So, if $\# C$ and $\#\mathcal{B}$ are relatively small we may even save an outgoing edge from $V_2$ or only use a subset of the alphabet $\cA$ on some outgoing edges. In any case 
%% Theorem~\ref{thm_cover_problem} gives us a first algorithm to determine the largest possible size $\sigma(s,\#\cA)$ of a code that can be $1$-error corrected for $\cN_s$ for any parameter 
%% $s\in\N_{\ge 1}$ and any alphabet size $\#\cA$. To this end we just have to loop over all codes with minimum Hamming distance at least $3$ and use Theorem~\ref{thm_cover_problem} 
%% to check whether $C$ can be $1$-error corrected for $\cN_s$. We remark that the underlying cover problem can be solved by a straightforward integer linear programming (ILP) 
%% formulation, as it is usually the case for cover problems. 

We have applied this approach to the list of classified optimal\footnote{Here by an optimal code we understand a code of the maximum possible size given block length, alphabet size, 
and minimum distance.} binary one-error-correcting codes. These are denoted
as $(n,|C|,3)$ codes, where the length is given by $n=s+2$ in our situation and $|C|$ is the maximum possible code size.  For lengths $4,\dots, 11$ the codes were classified 
in~\cite{ostergard1999optimal}, for lengths $12,13$  in \cite{krotov2011optimal}, and for lengths $14,15$ in \cite{ostergard2009perfect}. The corresponding numbers are 
summarized in Table~\ref{table_binary_one_error_correcting_codes} and the codes can be downloaded at \url{http://pottonen.kapsi.fi/codes}.
\begin{table}
  \begin{center}
    \begin{tabular} {lrrrrrrrrrrrrr}
    \hline  
    $n$         \!\!\!&\!\!\!3\!\!\!&\!\!\!4\!\!\!&\!\!\!5\!\!\!&\!\!\!6\!\!\!&\!\!\! 7\!\!\!&\!\!\! 8\!\!\!&\!\!\! 9\!\!\!&\!\!\! 10\!\!\!&\!\!\!  11\!\!\!&\!\!\!    12\!\!\!&\!\!\!    13\!\!\!&\!\!\!   14\!\!\!&\!\!\!  15\!\\
    $\max |C|$ \!\!\!&\!\!\!2\!\!\!&\!\!\!2\!\!\!&\!\!\!4\!\!\!&\!\!\!8\!\!\!&\!\!\!16\!\!\!&\!\!\!20\!\!\!&\!\!\!40\!\!\!&\!\!\! 72\!\!\!&\!\!\! 144\!\!\!&\!\!\!   256\!\!\!&\!\!\!   512\!\!\!&\!\!\! 1024\!\!\!&\!\!\!2048\!\\
    number        \!\!\!&\!\!\!1\!\!\!&\!\!\!2\!\!\!&\!\!\!1\!\!\!&\!\!\!1\!\!\!&\!\!\! 1\!\!\!&\!\!\! 5\!\!\!&\!\!\! 1\!\!\!&\!\!\!562\!\!\!&\!\!\!7398\!\!\!&\!\!\!237610\!\!\!&\!\!\!117832\!\!\!&\!\!\!38408\!\!\!&\!\!\!5983\!\\
    \hline
    \end{tabular}
    \caption{Number of non-isomorphic optimal binary $(n,|C|,3)$ codes.}
    \label{table_binary_one_error_correcting_codes}
  \end{center}
\end{table}
Since those enumerations only consider one-error correcting codes up to symmetry we have to loop over the coordinates and decide which one should be the one corresponding to $e_0$. Interestingly 
enough, for all $n\in\{3,\dots,15\}\backslash\{3,7,15\}$ all choices for $e_0$ and all choices for $C$ lead to a code that can be $1$-error corrected for 
$\cN_{n-2}$. Note that for $n\in\{3,7,15\}$ we deal with the parameters of the Hamming codes. Here no choice of $C$ or the coordinate of $e_0$ leads to 
a code that can be $1$-error corrected for $\cN_{n-2}$. Thus, we have $\sigma(\cN_1,2)\le 1$, $\sigma(\cN_2,2)=2$, $\sigma(\cN_3,2)=4$, $\sigma(\cN_4,2)=8$, $\sigma(\cN_5,2)\le 15$, 
$\sigma(\cN_6,2)=20$, $\sigma(\cN_7,2)=40$, $\sigma(\cN_8,2)=72$, $\sigma(\cN_9,2)=144$, $\sigma(\cN_{10},2)=256$, $\sigma(\cN_{11},2)=512$, $\sigma(\cN_{12},2)=1024$, and $\sigma(\cN_{13},2)\le 2047$.   

In the next section we will determine %%$\sigma(\cN_1,2)=1$, 
$\sigma(\cN_5,2)=14$ and $\sigma(\cN_s,a)$ for other small parameters $s$ and $a\ge 3$. We remark  
that $\sigma(\cN_1,a)=a-1$ was determined in \cite{beemer2022network,beemer2021curious}.

\section{Exact values and lower bounds for the maximum code sizes}
%%\section{Integer linear programming formulations}
\label{sec_ilp}
As mentioned in the previous section, $\sigma(\cN_1,a)=a-1$ was determined in \cite{beemer2022network,beemer2021curious}. Here an element $\tilde{a}\in\cA$ is chosen as a special symbol.
With this a suitable code $C$ is a threefold repetition code of the symbols in $\cA\backslash\left\{\tilde{a}\right\}$. If $|\tau(c_2)|>1$, then $V_2$ outputs $\tilde{a}$ with the 
corresponding set $B=C$. Here we will determine a few exact values and lower bounds for $\sigma(\cN_s,a)$ for small parameters. Upper bounds are discussed in Section~\ref{sec_bounds}. 

One of the most simple algorithmic approaches is to loop over all $|\cA|$-ary codes of length $s+2$ and minimum Hamming distance at least three with increasing sizes. For each
candidate code $C$ we can use Theorem~\ref{thm_cover_problem} and a small ILP computation to decide whether $C$ can be $1$-error corrected. For alphabet size $|\cA|=3$ and $s=2$ two codes 
that can be $1$-error corrected and have cardinality $5$ are given in Example~\ref{ex_2} and Example~\ref{ex_3}.  
%% a code of size $5$ is given by $C=\{0 0 0 0,0 1 1 1, 1 0 1 2,1 1 2 0,2 0 2 1\}$. If we number the codewords by $0,\dots,4$ and choose $e_0$ as the first 
%% component, then the different values of $\tau(c_2)$ for all $c_2\in\overline{C}_2$ are given by $\{0\}$, $\{1\}$, $\{2\}$, $\{3\}$, $\{4\}$, $\{1,2,4\}$, $\{0,2\}$, $\{0,3,4\}$, and $\{1,3,4\}$. 
%% Since $\#S_b \le \#\cA$ we cannot join two of these sets and all $3^2=9$ different output vectors of the intermediate node $V_2$ have to be used. Another code for these parameters is given by
%% $C=\left\{0 0 0 0,0 1 1 1,0 2 2 2,1 0 1 2,2 0 2 1\right\}$. Here the different values of $\tau(c_2)$ for all $c_2\in\overline{C}_2$ are given by $\{0\}$, $\{1\}$, $\{2\}$, $\{3\}$, $\{4\}$, 
%% $\{0,3\}$, $\{0,4\}$, $\{1,3,4\}$, and $\{2,3,4\}$. The sets $\{0,3\}$ and $\{0,4\}$ can be joined to $\{0,3,4\}$, so that only $8$ different output states are used. 
By exhaustive search we have verified that no $3$-ary code with length $4$, minimum Hamming distance at least $3$, and size $6$ can be $1$-error corrected, so that $\sigma(\cN_2,3)=5$.  

In the following we will only state the codes $C$ since a suitable collection of sets $\mathcal{B}$ can be easily computed using Theorem~\ref{thm_cover_problem} and a small ILP 
computation. For $|\cA|=3$ and $s=3$ a code of size 
$14$ is given by 
\begin{eqnarray*}
  C&=&\{0 0 0 0 0,0 0 1 1 1,0 0 2 2 2,0 1 0 1 2, 0 1 1 2 0, 0 2 2 0 1, 1 0 0 2 1, \\ 
  && 1 1 1 0 2,1 1 2 1 1,1 2 0 1 0,2 0 2 1 0,2 1 0 0 1, 2 2 0 2 2,2 2 1 0 0\}.
\end{eqnarray*} 
For $|\cA|=3$ and $s=4$ a code of size $35$ is given by 
\begin{eqnarray*}
  C&=&\{0 0 0 0 0 0,2 1 2 0 0 0,1 0 2 0 2 1,1 2 2 2 1 0,0 1 2 1 2 2,1 2 0 1 2 1,2 0 2 2 0 1,2 1 1 2 2 0, 1 1 0 2 1 1,\\ 
  &&  1 2 1 0 2 2, 0 0 1 1 2 0, 2 0 2 1 1 0,0 0 2 2 1 2, 0 2 0 0 1 2,0 2 1 1 0 2,0 1 1 0 0 1, 1 1 1 1 0 0,2 0 0 1 2 2,\\ 
  &&  2 0 0 0 1 1, 1 2 1 2 0 1, 0 2 2 1 1 1, 1 0 2 1 0 2, 0 1 0 1 1 0, 2 1 1 1 1 2, 0 0 0 2 2 1, 1 0 1 1 1 1, 2 2 0 2 0 0, \\  
  &&  1 1 0 0 2 0, 2 0 1 0 0 2, 0 1 0 2 0 2, 2 1 0 1 0 1, 0 2 2 0 2 0, 2 2 1 0 1 0, 1 1 2 0 1 2, 2 2 2 2 2 2\}.
\end{eqnarray*}   
By exhaustive search we have verified $\sigma(\cN_3,3)=14$. For $|\cA|=4$ and $s=2$ a code of size $9$ is given by 
\begin{eqnarray*}
  C&=&\{0 0 0 0,0 1 1 1, 0 2 2 2,1 0 1 2,1 1 0 3,2 0 2 1,2 1 3 0,3 2 0 1,3 3 1 0\}.
\end{eqnarray*}   
For $|\cA|=4$ and $s=3$ a code of size $31$ is given by 
\begin{eqnarray*}
  C&=&\{00000, 32222, 21202, 31010, 31133, 22311, 20323, 01320, 03131, 30120, \\ 
  &&    11100, 00232, 12021, 23233, 11331, 32101, 22132, 13201, 30302, 20210, \\ 
  &&    13113, 23020, 02303, 33330, 01213, 10033, 02012, 21121, 33003, 12230, \\ 
  && 13322\}.
\end{eqnarray*}
%% n=31: 4 5 31 0 0 0 0 0 3 2 2 2 2 2 1 2 0 2 3 1 0 1 0 3 1 1 3 3 2 2 3 1 1 2 0 3 2 3 0 1 3 2 0 0 3 1 3 1 3 0 1 2 0 1 1 1 0 0 0 0 2 3 2 1 2 0 2 1 2 3 2 3 3 1 1 3 3 1 3 2 1 0 1 2 2 1 3 2 1 3 2 0 1 3 0 3 0 2 2 0 2 1 0 1 3 1 1 3 2 3 0 2 0 0 2 3 0 3 3 3 3 3 0 0 1 2 1 3 1 0 0 3 3 0 2 0 1 2 2 1 1 2 1 3 3 0 0 3 1 2 2 3 0 1 3 3 2 2      
%%      -> 31+31=62<=64
%% (3,2) #=5 optimal
%% (3,3) #=14 optimal
%% (3,4) #>=35; upper bound 38 [heuristic  searches for an example of cardinality 36 failed]
%% (4,2) #>=9 (trivial upper bound of 16)
\begin{example}
  \label{ex_5}
  For $|\cA|=5$ and $s=2$ a code of size $15$ is given by the codewords $c_0=0000$, $c_1=4341$, $c_2=2410$, $c_3=2143$, $c_4=3211$, $c_5=1421$, 
  $c_6=2232$, $c_7=4104$, $c_8=3024$, $c_9=1244$, $c_{10}=1012$, $c_{11}=3302$, $c_{12}=0433$, $c_{13}=4220$, and $c_{14}=1330$. A set 
  $\mathcal{B}$ for the construction of a suitable network code is given by
  \begin{eqnarray*}
    && \Big\{
      \{c_0,c_2,c_7,c_{10},c_{11}\}, 
      \{c_0,c_3,c_7,c_8,c_9\},
      \{c_0,c_6,c_{11},c_{13},c_{14}\},
      \{c_0,c_8,c_{10},c_{13}\},\\ 
    &&\{c_1,c_2,c_{11},c_{12},c_{14}\}, 
      \{c_1,c_3,c_4,c_9,c_{12}\},
      \{c_1,c_5,c_6,c_8,c_{12}\},\\  
    &&\{c_2,c_4,c_5,c_{13}\}, 
      \{c_4,c_6,c_{10},c_{12}\}, 
      \{c_6,c_8,c_9,c_{12},c_{13}\}  
    \Big\}.
  \end{eqnarray*}
  %% 34 : 0 8 13 10 
  %% 49 : 0 10 11 2 7 
  %% 53 : 0 7 8 3 9 
  %% 72 : 1 4 9 3 12 
  %% 127 : 1 14 11 2 12 
  %% 151 : 2 4 13 5 
  %% 208 : 6 9 8 13 12 
  %% 218 : 6 12 4 10 
  %% 223 : 6 12 1 5 8 
  %% 253 : 6 11 14 0 13 
\end{example}

\medskip

For $|\cA| =2$ we have that $|\tau(c_2)|>1$ is equivalent to $|\tau(c_2)|=2$ and the two codewords in $\tau(c_2)$ are at Hamming distance $2$. Moreover, for each pair of codewords 
$a,b\in C_2$ with $d(a,b)=2$ there are exactly two vectors $x\in\cA^{s+1}$ with $d(a,x)=d(x,b)=1$. This characterization allows to determine $\sigma(\cN_s,2)$ via the 
following ILP formulation. For each $v\in \{0,1\}^{s+2}$ we introduce binary variables $x_v$ and binary variables $y_{v'}$ for each $v'\in \{0,1\}^{s+1}$. The meaning of the
$x_v$'s is given by $x_v=1$ iff $v\in C$. The meaning of the $y_{v'}$'s is given by $y_{v'}=1$ iff $|\tau(v')|>1$. We minimize the code size 
$|C|=\sum_{v\in\{0,1\}^{s+2}} x_v$ subject to the constraints
\begin{equation}
  \sum_{v\in \{0,1\}^{s+2}}\ x_v \,+\, \frac{1}{2}\cdot \!\!\!\!\!\sum_{v'\in \{0,1\}^{s+1}} y_{v'} \,\le 2^{s}
\end{equation} 
modeling the condition from Theorem~\ref{thm_cover_problem},
\begin{equation}
  y_{m} \ge \sum_{i\in \cA} x_{(i,a)}+\sum_{j\in\cA} x_{(j,b)}-1
\end{equation}  
for all $a,b\in\{0,1\}^{s+1}$ with $d(a,b)=2$ and all $m\in \{0,1\}^{s+1}$ with $d(a,m)=d(m,b)=1$, which ensures 
that the pair of codewords $(i,a)$ and $(j,b)$ can only be taken if $y_m=1$, and
\begin{equation}
  \sum_{v\in\{0,1\}^{s+2}\,:\, v|_S=u} x_v \le 1
\end{equation}
for all $u\in \{0,1\}^s$ and all $S\subseteq \{1,\dots,s+2\}$ with $|S|=s$, which models $d(C)\ge 3$, where $v|_S$ is the restriction of the vector $v$ to the coordinates in $S$.

For $s=5$ a code of size $14$ is given by 
\begin{eqnarray*}
  C &=& \{0001101,0011110, 0100100, 0101011, 0110111, 0111000, 1000111, \\ 
    & & 1001000, 1010100, 1011011, 1100001, 1101110,1110010, 1111101\}
\end{eqnarray*} 
and using the above ILP we could verify that there is no such code of size $15$, so that $\sigma(\cN_5,2)=14$. We remark that the bound $|C|+|\mathcal{B}|=|\cA|^s$ is met with equality in our example. 
%% x13                           1.000000
%% x30                           1.000000
%% x36                           1.000000
%% x43                           1.000000
%% x55                           1.000000
%% x56                           1.000000
%% x71                           1.000000
%% x72                           1.000000
%% x84                           1.000000
%% x91                           1.000000
%% x97                           1.000000
%% x110                          1.000000
%% x114                          1.000000
%% x125                          1.000000
%% y7_13                         1.000000
%% y7_55                         1.000000
%% y8_13                         1.000000
%% y8_56                         1.000000
%% y13_61                        1.000000
%% y20_30                        1.000000
%% y20_36                        1.000000
%% y27_30                        1.000000
%% y27_43                        1.000000
%% y30_46                        1.000000
%% y33_36                        1.000000
%% y33_43                        1.000000
%% y36_46                        1.000000
%% y43_46                        1.000000
%% y50_55                        1.000000
%% y50_56                        1.000000
%% y55_61                        1.000000
%% y56_61                        1.000000

\section{Upper bounds for the maximum code sizes}
\label{sec_bounds}
In \cite{beemer2022network_arxiv} the two bounds $\sigma(\cN_s,a)\le a^s$ and $\sigma(\cN_s,a)\ge a^{s-1}$, if $a$ is sufficiently large, were shown. Both have an easy explanation. 
For the lower bound we may restrict to network codes where $V_1$ just forwards and the terminal $T$ ignores the output from $V_1$, i.e., the entire decoding is actually performed 
in intermediate node $V_2$. Let $A_q(n,d)$ denote the maximum number of vectors in a $q$-ary code of word length $n$ and with Hamming distance $d$, so that 
$\sigma(\cN_s,a)\ge A_a(s+1,3)$. Since for a fixed length and minimum distance MDS codes exist for all sufficiently large alphabet sizes $a$ this implies $\sigma(\cN_s,a)\ge a^{s-1}$. 
For other known lower bounds for $A_q(n,d)$ we refer to e.g.\ \cite{bogdanova2001error,bogdanova2001bounds,brouwer1998bounds} %%, more recent citing papers, 
and especially the webpage 
\url{https://www.win.tue.nl/~aeb/}. For upper bounds we can use Lemma~\ref{lemma_min_dist} to conclude $\sigma(\cN_s,a)\le A_a(s+2,3)$, so that the Singleton bound gives 
$\sigma(\cN_s,a)\le a^s$ for all parameters $s$ and $a$. In Table~\ref{table_ub_codes} we summarize some known upper bounds for $A_a(s+2,3)$ from the mentioned website.          

\begin{table}[htp]
  \begin{center}
    \begin{tabular}{rrrrrrrrrrrrrr}
      \hline
      $s$     & 2 &  3 &  4 &   5 &   6 &  2 &  3 &   4 &   5 &  2 &   3 &   4 &    5 \\ 
      $a$     & 3 &  3 &  3 &   3 &   3 &  4 &  4 &   4 &   4 &  5 &   5 &   5 &    5 \\ 
      $\le$ & 9 & 18 & 38 & 111 & 333 & 16 & 64 & 176 & 596 & 25 & 125 & 625 & 2291 \\      
      \hline
    \end{tabular}
    \caption{Upper bounds for $A_a(s+2,3)$.}
    \label{table_ub_codes}
  \end{center}   
\end{table}

In \cite[Remark 5.10]{beemer2022network_arxiv} the {\lq\lq}conjecture{\rq\rq} $\sigma(\cN_s,a)=\frac{a^s+a}{2}-1$ was mentioned. For $(\cN,a)\in 
\big\{(\cN_1,\star),(\cN_2,2),(\cN_3,2),(\cN_4,2),(\cN_2,3),(\cN_3,3),(\cN_2,4)\big\}$ code sizes $\frac{a^s+a}{2}-1$ can indeed be attained.  
However, in \cite{roos1978some}\footnote{Cf.\ \url{https://mathscinet.ams.org/mathscinet-getitem?mr=484738}} the upper bound
$$
  A_a(n,3) \le \frac{a^n \cdot  (hn-x(a-x))}{(hn+x)\cdot (hn+x-a)},
$$   
where $h=a-1$,  $n \equiv x \pmod a$, and $1\le x\le a$, was concluded for $a>2$ and sufficiently large $n$ from 
Delsarte's linear programming method. 
%% https://mathoverflow.net/questions/293549/explicit-formula-of-delsartes-linear-programming-upper-bound-for-a-qn-3
Thus, we have $A_a(n,3) < \frac{a^n}{(a-2)n}$ for $a>2$ and sufficiently large $n$, i.e., if the conjecture is true, then $a$ 
has to be sufficiently large for a given parameter $s$. For $s=4$ and alphabet size $3$ we have $35 \le \sigma(\cN_4,3)\le A_3(6,3)=38<41$ 
and also $\sigma(\cN_s,2)<\frac{2^s+2}{2}-1$ for all $s\ge 5$.  
%% (3,4) #>=35; upper bound 38 [heuristic  searches for an example of cardinality 36 failed]
We remark that the code of Example~\ref{ex_5} shows $\sigma(\cN_2,5)\ge 15>\frac{5^2+5}{2}-1$.

\medskip

While Lemma~\ref{lemma_min_dist} implies $d(C_2)\ge 2$ only, we can use the number of ordered pairs of codewords 
of $C_2$ at distance $2$ 
\begin{equation}
  \Lambda:=\left|\left\{(a,b)\in C_2^2\,:\, d(a,b)=2\right\}\right| 
\end{equation}
to lower bound $|\mathcal{B}|$ in Theorem~\ref{thm_cover_problem} and to derive the following necessary criterion:
\begin{lemma}
  \label{lemma_bound_pairs_at_distance_2}
  Let $C\subseteq \cA^{s+2}$ be a code that can be $1$-error corrected for $\cN_s$. 
  Then, we have 
  $$
    |C|+ \frac{\Lambda}{|\cA|(|\cA|-1)} \le |\cA|^s.
  $$ 
\end{lemma}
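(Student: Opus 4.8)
The plan is to feed the characterization of Theorem~\ref{thm_cover_problem} into a double-counting argument that bounds the size of the covering family $\mathcal{B}$ from below in terms of $\Lambda$, and then to combine this with the upper bound $|\mathcal{B}|\le|\cA|^s-|C|$ supplied by the same theorem.

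First I would record two elementary facts. Since $d(C)\ge 3$ by Lemma~\ref{lemma_min_dist}, the truncation map $(c_1,c_2)\mapsto c_2$ is injective on $C$: two codewords sharing their second component would be at Hamming distance at most $1$, contradicting $d(C)\ge 3$. Hence each $a\in C_2$ is the second component of a unique codeword, which I denote by $c^{(a)}=(a_1,a)\in C$, and in particular $|C_2|=|C|$. Second, since the elements of any $B$ occurring in a family $\mathcal{B}$ as in Theorem~\ref{thm_cover_problem} have pairwise distinct first coordinates, we always have $|B|\le|\cA|$.

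Next I would link distance-$2$ pairs of $C_2$ to the covering condition. Let $(a,b)\in C_2^2$ with $d(a,b)=2$, and let $i$ be one of the two coordinates in which $a$ and $b$ differ. Setting $c_2\in\cA^{s+1}$ to be $a$ with its $i$-th coordinate replaced by $b_i$, we get $d(a,c_2)=1$ and $d(b,c_2)=1$, so $c^{(a)},c^{(b)}\in\tau(c_2)$ by Definition~\ref{def_tau}, and in particular $|\tau(c_2)|>1$. By Theorem~\ref{thm_cover_problem} there is then a $B\in\mathcal{B}$ with $\tau(c_2)\subseteq B$, hence $\{c^{(a)},c^{(b)}\}\subseteq B$. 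Fixing one such $B$ for every ordered distance-$2$ pair $(a,b)$ gives a map $\beta$ from the set of these $\Lambda$ ordered pairs to $\mathcal{B}$.

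Finally I would count the fibres of $\beta$. For a fixed $B\in\mathcal{B}$, the assignment $(a,b)\mapsto(c^{(a)},c^{(b)})$ is injective (because truncation is injective on $C$) and maps $\beta^{-1}(B)$ into the set of ordered pairs of distinct elements of $B$, a set of cardinality $|B|(|B|-1)\le|\cA|(|\cA|-1)$. Summing over $B\in\mathcal{B}$ yields $\Lambda\le|\mathcal{B}|\cdot|\cA|(|\cA|-1)$. Taking $\mathcal{B}$ as in Theorem~\ref{thm_cover_problem} with $|\mathcal{B}|\le|\cA|^s-|C|$ and rearranging gives $|C|+\Lambda/\bigl(|\cA|(|\cA|-1)\bigr)\le|\cA|^s$, as claimed. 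No step poses a real difficulty; the only point needing care is the injectivity of truncation on $C$, which guarantees that a distance-$2$ pair in $C_2$ genuinely corresponds to a pair of distinct codewords lying inside the same block $B$, and hence that each block absorbs at most $|\cA|(|\cA|-1)$ of the $\Lambda$ pairs.
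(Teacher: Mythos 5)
Your proposal is correct and follows essentially the same route as the paper: every ordered distance-$2$ pair of $C_2$ forces some $\tau(x)$ with $|\tau(x)|>1$ to lie in a block $B\in\mathcal{B}$, each block of size at most $|\cA|$ absorbs at most $|\cA|(|\cA|-1)$ such ordered pairs, so $|\mathcal{B}|\ge \Lambda/\bigl(|\cA|(|\cA|-1)\bigr)$, and the bound $|\mathcal{B}|\le|\cA|^s-|C|$ from Theorem~\ref{thm_cover_problem} finishes the argument. Your explicit justification of the injectivity of the truncation $(c_1,c_2)\mapsto c_2$ only spells out a detail the paper leaves implicit.
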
 
\begin{proof}
  Let $\mathcal{B}$ as in Theorem~\ref{thm_cover_problem}. If $a,b\in C_2$ are two codewords with $d(a,b)=2$, then there exists a vector 
  $x\in \cA^{s+1}$ with $d(a,x)=d(x,b)=1$, so that $a,b\in \tau(x)$. Thus, there exists a set $B\in\mathcal{B}$ 
  with $\tau(x)\subseteq B$. Since $|B|\le |\cA$| at most $|\cA|(|\cA|-1)$ ordered pairs counted in $\Lambda$ can yield the same 
  set $B\in \mathcal{B}$, so that $|\mathcal{B}|\ge \frac{\Lambda}{|\cA|(|\cA|-1)}$ and the stated inequality follows from 
  Theorem~\ref{thm_cover_problem}.%%\hfill{$\square$} 
\end{proof}

A first, rather weak, lower bound for $\Lambda$ can be obtained easily. For network $\cN_s$ the Singleton bound implies $|C_2|\le |\cA|^{s-1}$ if 
$d(C_2)\ge 3$, so that $C_2$ contains at least one pair of codewords at Hamming distance $2$. Iteratively removing a codeword in such a pair, we 
can conclude $\Lambda\ge |C|-|\cA|^{s-1}$ for $|C|\ge |\cA|^{s-1}$. 
%% In Subsection~\ref{subsec_lambda_bound_s_2} we will give a refined analysis for the special case $s=2$ and 
In Subsection~\ref{subsec_linear_programming_method} we will utilize the linear programming method for codes to 
obtain lower bounds for $\Lambda$ and state the corresponding upper bounds for $\sigma(\cN_s,a)$ for small parameters.

\subsection{Upper bounds for the code sizes for $\cN_s$ based on the linear programming method}
\label{subsec_linear_programming_method}
We can also use Delsarte's linear programming method to derive lower bounds for $\Lambda$ and then apply Lemma~\ref{lemma_bound_pairs_at_distance_2} to upper bound the code 
size $|C|$. To this end we mention that for integers $n\ge 1$ and $q\ge 2$ the \emph{Krawtchouk polynomials} are defined as 
\begin{equation}
  K_i^{(n,q)}(z):=\sum_{j=0}^i (-1)^j (q-1)^{i-j} q^j {{n-j}\choose {n-i}}{z\choose j}
\end{equation}
for all $i\ge 0$, where ${z\choose j}:=z(z-1)\cdots (z-j+1)/j!$ for all $z\in\mathbb{R}$.  The vector $B(C) = \left(B_0, B_1,\dots, B_n\right)$, 
where
\begin{equation}
  B_i = \frac{1}{|C|} \cdot\left|\left\{(a,b)\in C^2\,\mid\, d(a,b)=i\right\}\right|, \,\,i=0,1,\dots,n 
\end{equation}
is called the \emph{distance distribution} of $C$. Clearly, $B_0 = 1$ and $B_i = 0$ for all $1\le i\le d(C)-1$. Moreover, $\Lambda=|C|\cdot B_2$.   
The vector $B'(C)=\left(B_0',B_1',\dots,B_n'\right)$, where
\begin{equation}
  B_i'=\frac{1}{|C|} \sum_{j=0}^n B_jK_i^{(n,q)}(j), \,\,i=0,1,\dots,n
\end{equation}
is called the \emph{dual distance distribution} of $C$. %% or the \emph{MacWilliams transform} of $B(C)$. 
Obviously we have $B_0'=1$. 

\begin{theorem} (\cite{delsarte1973algebraic,delsarte1973four}) 
  The dual distance distribution of $C$ satisfies $B_i'\ge 0$ for all $0\le i\le n$.
\end{theorem}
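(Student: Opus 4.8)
The plan is to exhibit $|C|^{2}B_i'$ as a sum of squared moduli of complex numbers, which is manifestly nonnegative. As a first step I would equip the alphabet $\cA$ with an arbitrary abelian group structure, turning it into a group of order $q=|\cA|$ (e.g.\ $\cA\cong\Z/q\Z$); this is harmless, since the distance distribution $B(C)$, the dual distance distribution $B'(C)$ and the polynomials $K_i^{(n,q)}$ all depend only on Hamming distances, not on the group structure. Fix a non-degenerate symmetric pairing $\langle\cdot,\cdot\rangle\colon\cA\times\cA\to\C^{\times}$ — for instance $\langle a,b\rangle=\exp(2\pi\mathrm{i}\,ab/q)$ in the cyclic case — and extend it coordinatewise to $\cA^{n}$ by $\langle v,x\rangle:=\prod_{k=1}^{n}\langle v_k,x_k\rangle$, so that $v\mapsto\langle v,\cdot\rangle$ runs over all characters of $\cA^{n}$. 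For the Fourier transform $\widehat{C}(v):=\sum_{c\in C}\langle v,c\rangle$ of the indicator function of $C$ one then has $\bigl|\widehat{C}(v)\bigr|^{2}=\sum_{c,c'\in C}\langle v,c-c'\rangle$, and summing this identity over all $v$ of a fixed Hamming weight $i$ produces a nonnegative real number on the left-hand side.

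The heart of the argument is the character-sum identity
\[
  \sum_{v\in\cA^{n}:\ \mathrm{wt}(v)=i}\langle v,x\rangle\;=\;K_i^{(n,q)}\!\bigl(\mathrm{wt}(x)\bigr)\qquad\text{for all }x\in\cA^{n},
\]
where $\mathrm{wt}$ denotes Hamming weight with respect to the group identity. I would prove it by a generating-function computation. Because the pairing factors across coordinates,
\[
  \sum_{v\in\cA^{n}}z^{\mathrm{wt}(v)}\langle v,x\rangle
  =\prod_{k=1}^{n}\Bigl(1+z\!\!\sum_{a\in\cA\setminus\{0\}}\!\!\langle a,x_k\rangle\Bigr)
  =(1-z)^{\mathrm{wt}(x)}\bigl(1+(q-1)z\bigr)^{\,n-\mathrm{wt}(x)},
\]
where the per-coordinate factor is evaluated via orthogonality of characters in a single coordinate: $\sum_{a\in\cA}\langle a,x_k\rangle$ equals $q$ if $x_k=0$ and $0$ otherwise, so $\sum_{a\neq 0}\langle a,x_k\rangle$ equals $q-1$ if $x_k=0$ and $-1$ otherwise. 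On the other side, a short interchange of summations starting from the definition of $K_i^{(n,q)}$ used in this paper (the factor $q^{j}$ there is precisely what produces the clean form) gives $\sum_{i=0}^{n}K_i^{(n,q)}(w)z^{i}=(1-z)^{w}\bigl(1+(q-1)z\bigr)^{n-w}$ for every integer $w$ with $0\le w\le n$. Comparing the coefficients of $z^{i}$ in the two generating functions, with $\mathrm{wt}(x)$ in the role of $w$, yields the displayed identity.

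The remaining step is to assemble the pieces. Interchanging the two finite sums and using $\mathrm{wt}(c-c')=d(c,c')$ together with the identity just established,
\[
  0\;\le\;\sum_{v:\ \mathrm{wt}(v)=i}\bigl|\widehat{C}(v)\bigr|^{2}
  \;=\;\sum_{c,c'\in C}\ \sum_{v:\ \mathrm{wt}(v)=i}\langle v,c-c'\rangle
  \;=\;\sum_{c,c'\in C}K_i^{(n,q)}\!\bigl(d(c,c')\bigr)
  \;=\;|C|\sum_{j=0}^{n}B_jK_i^{(n,q)}(j)
  \;=\;|C|^{2}B_i',
\]
where the last two equalities are just the definitions of $B(C)$ and of $B'(C)$. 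Since $|C|\ge 1$, this forces $B_i'\ge 0$ for all $0\le i\le n$, as claimed.

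I expect the only genuinely delicate point to be the coordinatewise character-sum identity — in particular getting the two generating functions to match and correctly handling the normalisation of $K_i^{(n,q)}$ used here; once that is in place, nonnegativity is automatic. A more structural but essentially equivalent alternative would be to invoke the Hamming association scheme directly: the $K_i^{(n,q)}$ are the eigenvalues of its adjacency matrices, and $|C|^{2}B_i'$ equals $\mathbf{1}_C^{\top}E_i\mathbf{1}_C$ up to a positive scalar, with $E_i$ a primitive idempotent and hence positive semidefinite; I would nevertheless prefer the self-contained Fourier computation above, which needs no association-scheme machinery.
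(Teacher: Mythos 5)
Your proof is correct. Note, however, that the paper does not prove this statement at all: it is quoted as a known theorem of Delsarte with citations, so there is no in-paper argument to compare against. What you supply is the standard self-contained character-sum (MacWilliams/Fourier) proof of Delsarte's inequalities, and all the delicate points check out: the paper's normalisation $K_i^{(n,q)}(z)=\sum_{j}(-1)^j(q-1)^{i-j}q^j\binom{n-j}{n-i}\binom{z}{j}$ does satisfy $\sum_{i=0}^n K_i^{(n,q)}(w)z^i=(1-z)^w\bigl(1+(q-1)z\bigr)^{n-w}$ for integers $0\le w\le n$ (swap the sums and apply the binomial theorem, using $\binom{n-j}{n-i}=\binom{n-j}{i-j}$), so it agrees with the classical Krawtchouk polynomial and your coordinatewise orthogonality computation gives $\sum_{\mathrm{wt}(v)=i}\langle v,x\rangle=K_i^{(n,q)}(\mathrm{wt}(x))$; since the paper's $B_j$ counts ordered pairs, the bookkeeping $\sum_{\mathrm{wt}(v)=i}|\widehat{C}(v)|^2=|C|\sum_j B_jK_i^{(n,q)}(j)=|C|^2B_i'\ge 0$ is exactly right. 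Two minor remarks: the imposed group structure is indeed harmless because $B'$ is defined purely from the distance distribution, and you do not really need the pairing to be symmetric or to enumerate all characters—only bi-additivity and non-degeneracy in one coordinate, which yields the orthogonality you use; your cyclic choice $\langle a,b\rangle=\exp(2\pi\mathrm{i}\,ab/q)$ suffices. Your closing alternative (positive semidefiniteness of the primitive idempotents of the Hamming scheme) is essentially Delsarte's original route, so either argument legitimately backs the citation.
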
  
In Table~\ref{table_ub_codes_2} we have listed a few explicit upper bounds for $\sigma(\cN_s,a)$, where $2\le s\le 5$ and $3\le a\le 5$,  
based on Lemma~\ref{lemma_bound_pairs_at_distance_2} and the linear programming method (using $B_1=0$ and 
minimizing $B_2$). For more details we refer to the recent survey \cite{boyvalenkov2021linear}.

\begin{table}[htp]
  \begin{center}
    \begin{tabular}{rrrrrrrrrrrrrr}
      \hline
      $s$     & 2 &  3 &  4 &   5 &  2 &  3 &   4 &   5 &  2 &  3 &   4 &    5 \\ 
      $a$     & 3 &  3 &  3 &   3 &  4 &  4 &   4 &   4 &  5 &  5 &   5 &    5 \\ 
      $\le$ & 6 & 15 & 42 & 108 & 11 & 37 & 133 & 484 & 17 & 76 & 337 & 1512 \\      
      %%$\le$ & 6 & 15 & 42 & 108 & 10 & 37 & 133 & 484 & 16 & 76 & 337 & 1512 \\
      \hline
    \end{tabular}
    \caption{Upper bounds for $\sigma(\cN_s,a)$ based on the linear programming method.}
    \label{table_ub_codes_2}
  \end{center}   
\end{table}

We remark that the upper bounds in Table~\ref{table_ub_codes_2} seem to be better than those in Table~\ref{table_ub_codes} if 
$s$ is rather small. E.g.\ the approach from this subsection implies $\sigma(\cN_4,3)\le 42$, while we have $\sigma(\cN_4,3)\le A_3(6,3)=38$. 
On the other hand we have $\sigma(\cN_3,3)\le A_3(5,3)=18$ while Table~\ref{table_ub_codes_2} states $\sigma(\cN_3,3)\le 15$ noting that 
we have determined $\sigma(\cN_3,3)=14$ by exhaustive enumeration in Section~\ref{sec_ilp}. We give a summary of the best known bounds 
for small parameters in the subsequent conclusion. 

\section{Conclusion}
\label{sec_conclusion}

In this paper we have considered the problem of determining the maximum size $\sigma(\cN_s,a)$ of a code that can be $1$-error corrected 
for the network $\cN_s$ and a given alphabet of size $a$. The considered family of networks is one of five infinite 
families of so-called $2$-level networks studied in \cite{beemer2022network_arxiv}, more precisely family~B. Bounds for such $2$-level networks 
can be used to conclude upper bounds for more general networks as demonstrated in \cite{beemer2022network_arxiv}. Besides that they form a 
starting point for first exact computations there is some hope that the obtained insights might be generalized. In this context we mention that determining the
maximum size of a code that can be $t$-error corrected for the most simple {\lq\lq}network{\rq\rq} consisting just of a source $S$, a terminal $T$, 
and $n$ parallel edges between $S$ and $T$ over an alphabet of size $q$ corresponds to the determination of $A_q(n,2t+1)$, i.e., the maximum number 
of vectors in a $q$-ary code of word length $n$ and with Hamming distance $2t+1$, which triggered a lot of research and turned out to be a quite intricate 
problem that does not seem to admit an easy closed-form solution.

Our main result is a characterization result for outer codes that can be $1$-error corrected for network $\cN_s$ in Theorem~\ref{thm_cover_problem}. 
On the computational side the problem of deciding whether a given code $C$ can be $1$-error corrected (and the construction of an achieving network code) 
is reduced to a covering problem. Via exhaustive enumeration of codes we have determined a few exact values of $\sigma(\cN_s,a)$ in Section~\ref{sec_structure} 
and Section~\ref{sec_ilp}. For the special case of a binary alphabet we also stated a direct integer linear programming formulation 
for $\sigma(\cN_s,2)$ which was used to compute $\sigma(\cN_5,2)=14$. With respect to upper bounds for the code size $|C|$ the easy observation $d(C)\ge 3$ 
for the minimum Hamming distance yields $\sigma(\cN_s,a)\le A_a(s+2,3)$, so that known results from the literature can be applied. We e.g.\ remark 
that the {\lq\lq}conjecture{\rq\rq} $\sigma(\cN_s,a)=\frac{a^s+a}{2}-1$ from \cite[Remark 5.10]{beemer2022network_arxiv} is wrong if one does not assume that 
the alphabet size $a$ is sufficiently large. Also based on the characterization result in Theorem~\ref{thm_cover_problem} we have utilized 
the number of codewords of Hamming distance $2$ to upper bound the maximum possible code size, see Lemma~\ref{lemma_bound_pairs_at_distance_2} in Section~\ref{sec_bounds}. 
%% This approach was pursued in Subsection~\ref{subsec_lambda_bound_s_2} for the special case $s=2$ and yields e.g.\ $\sigma(\cN_2,a)<0.6181a^2$ for sufficiently 
%% large $a$, which improves upon the previously best known upper bound $\sigma(\cN_2,a)<a^2$. For $s>2$ 
An application of the linear programming method for codes 
yields numerical improvements on $\sigma(\cN_s,a)<a^s$, see Subsection~\ref{subsec_linear_programming_method}. We summarize our numerical results for $\sigma(\cN_s,a)$ 
and small parameters in Table~\ref{table_summary}.       

\begin{table}[htp]
  \begin{center}
    \begin{tabular}{rrrrrrrrrrrrrr}
      \hline
      $s$               & 1 & 2 & 3 & 4 &  5 &  6 &  7 &  8 &   9 &  10 &  11 &   12 \\ 
      $a$               & 2 & 2 & 2 & 2 &  2 &  2 &  2 &  2 &   2 &   2 &   2 &    2 \\ 
      $\sigma(\cN_s,a)$ & 1 & 2 & 4 & 8 & 14 & 20 & 40 & 72 & 144 & 256 & 512 & 1024 \\      
      \hline
    \end{tabular}
    \smallskip
    \begin{tabular}{rrrrrrrrrrrrr}
      \hline
      $s$               & 1 & 2 &  3 &      4 & 1 &     2 &      3 & 1 &      2 \\ 
      $a$               & 3 & 3 &  3 &      3 & 4 &     4 &      4 & 5 &      5 \\ 
      $\sigma(\cN_s,a)$ & 2 & 5 & 14 & 35--38 & 3 & 9--11 & 31--37 & 4 & 15--17 \\      
      \hline
    \end{tabular}
    \caption{Best known bounds for $\sigma(\cN_s,a)$ and small parameters.}
    \label{table_summary}
  \end{center}   
\end{table}

We propose the determination of tighter bounds for $\sigma(\cN_s,a)$ as an interesting open problem. 
In our opinion also the two infinite families $A$ and $E$ from \cite{beemer2022network_arxiv} deserve a 
tailored study for fixed alphabet sizes. For family $E$ one can use a subset of a repetition code and 
a suitable network code to achieve code sizes that grow linearly in the alphabet size. 
For network $A_t$, where the adversary can manipulate $t$ of the outgoing 
edges of the source $S$, a similar result as Theorem~\ref{thm_cover_problem} can be shown. Without going into 
details we just state that one can assume that intermediate vertex $V_1$ forwards the information from its $t$ incoming 
edges and for intermediate vertex $V_2$ we can define 
\begin{eqnarray*}
  \tau(c_2)&:=&\Big(\left\{c'\in C\,\mid\, \exists c_1\in \cA^t; d(c',(c_1,c_2))=1\right\},\dots,\\ 
  && \left\{c'\in C\,\mid\, \exists c_1\in \cA^t; d(c',(c_1,c_2))=t\right\}\Big)
\end{eqnarray*} 
for all $c_2\in\cA^{2t}$. The {\lq\lq}analog{\rq\rq} of $\mathcal{B}$ would be a set whose elements consist of lists of length $t$ of 
subsets of the outer code $C$. The interpretation is that those lists contain candidates for the originally sent codeword 
assuming that a certain number of incoming edges of $V_2$ was attacked. Together with the information from $V_1$ 
a unique decoding should be possible. Of course everything gets more technical and we currently do not know for which classes 
of 2-level networks a similar approach might work or what a good generalization would be.  

\section*{Acknowledgements}
The author  would like to thank the organizers of the Sixth Irsee Conference on Finite Geometries for their invitation. 
Further thanks go to Altan K{\i}l{\i}{\c{c}} who catched the authors interest by talks on the mentioned conference and 
the 25th International Symposium on Mathematical Theory of Networks and Systems (MTNS) 2022 as well as several discussions 
during the coffee breaks and afterwards. Last but not least the author thanks the two anonymous referees for their careful 
reading and their valuable comments that helped to improve the presentation of this paper.   

%%\bibliographystyle{abbrv}
%%\bibliography{Irsee6-Diamond}

\end{document}